\newtheorem{theorem}{Theorem}
\newtheorem{lemma}{Lemma}
\newtheorem{corollary}{Corollary}
\newcommand{\opt}{\textsc{Opt}}
\newcommand{\argmax}{\rm{argmax}}
\definecolor{grey}{gray}{0.75}
\newcommand{\email}[1]{{\tt \small #1}}
\begin{document}

\title{Split scheduling with uniform setup times}

\author[1]{Frans Schalekamp}
\author[2,3]{Ren{\'e} Sitters}
\author[2]{Suzanne van der Ster}
\author[2,3]{Leen~Stougie}
\author[4]{V{\'i}ctor Verdugo}
\author[1]{Anke van Zuylen}
\affil[1]{College of William \& Mary, Department of Mathematics, Williamsburg, VA 23185, USA\\            \email{\{frans, anke\}@wm.edu} }
\affil[2]{Department of Operations Research, VU University Amsterdam, The Netherlands\\
           \email{\{r.a.sitters, suzanne.vander.ster, l.stougie\}@vu.nl} }
\affil[3]{CWI, Amsterdam, The Netherlands\\
           \email{\{sitters, stougie\}@cwi.nl}\thanks{This research was partially supported by Tinbergen Institute.}}
\affil[4]{Department of Industrial Engineering, University of Chile, Santiago, Chile\\
           \email{vverdugo@dim.uchile.cl}\thanks{Research supported by EU-IRSES grant EUSACOU.
           }}

\maketitle

\begin{abstract}
\noindent We study a scheduling problem in which jobs may be split into parts, where the parts of a split job may be processed simultaneously on more than one machine. Each part of a job requires a setup time, however, on the  machine where the job part is processed. During setup a machine cannot process or set up any other job. We concentrate on the basic case in which setup times are job-, machine-, and sequence-independent. Problems of this kind were encountered when modelling practical problems in planning disaster relief operations. Our main algorithmic result is a polynomial-time algorithm for minimising total completion time on two parallel identical machines. We argue why the same problem with three machines is not an easy extension of the two-machine case, leaving the complexity of this case as a tantalising open problem. We give a constant-factor approximation algorithm for the general case with any number of machines and a polynomial-time approximation scheme for a fixed number of machines. For the version with objective minimising {\em weighted} total completion time we prove NP-hardness. Finally, we conclude with an overview of the state of the art for other split scheduling problems with job-, machine-, and sequence-independent setup times.
\end{abstract}

\section{Introduction}\label{sec:intro}
We consider a scheduling problem with {\em setup times} and {\em job splitting}.
Given a set of identical parallel machines and a set of jobs with processing times, the goal of the scheduling problem is to schedule the jobs on the machines such that a given objective, for example the makespan or the sum of completion times, is minimised. With ordinary preemption, feasible schedules do not allow multiple machines to work on the same job simultaneously. In {\em job splitting}, this constraint is dropped. Without setup times, allowing job splitting makes many scheduling problems trivial: both for minimising make-span and for minimising total (weighted) completion time, an optimal schedule is obtained by splitting the processing time of each job equally over all machines, and processing the jobs in arbitrary order on each machine in case of makespan, and in {\em (weighted) shortest processing time first} ((W)SPT) order in case of total (weighted) completion time. See Xing and Zhang~\cite{XingZ00} for an overview of several classical scheduling problems which become polynomially solvable if job splitting is allowed.

In the presence of release times, minimising total completion time with ordinary preemption is NP-hard \cite{DuLeungYoung}, whereas it is easy to see that if we allow job splitting, then splitting all jobs equally over all machines and applying the {\em shortest remaining processing time first} (SRPT) rule gives an optimal schedule.

Triviality disappears when setup times are present, i.e., when each machine requires a setup time before it can start processing the next job (part).
During setup, a machine cannot process any job nor can it set up the processing of any other job (part). Problems for which the setup times are allowed to be sequence-dependent are usually NP-hard, as such problems tend to exhibit routinglike features.
For example, the Hamiltonian path problem in a graph can be reduced to the problem of minimising the makespan on a single machine, where each job corresponds to a node in the graph, the processing times are 1, and the setup time between job $i$ and $j$ is 0 if the graph contains an edge between $i$ and $j$, and 1 otherwise.
However, as we will see, adding setup times leads to challenging algorithmic problems, already if the setup times are assumed to be job-, machine-, and sequence-independent. 

We encountered such problems in studying disaster relief operations. For example in modelling flood relief operations, the machines are pumps and the jobs are locations to be drained. Or in the case of earthquake relief operations, the machines are teams of relief workers and the jobs are locations to be cleared. The setup is the time required to install the team on the new location.
Although, in principle, these setup times consist partly of travel time, which are sequence-dependent, the travel time is negligible compared to the time required to equip the teams with instructions and tools for the new location. Hence, considering the setup times as being location- and sequence-independent was in this case an acceptable approximation of reality.

In this paper we concentrate on a basic scheduling problem and consider the variation where we allow job splitting with setup times that are job-,
machine-, and sequence-independent, to which we will refer here as {\em uniform setup times}; i.e., we assume a \emph{uniform} setup time $s$.  There exists little literature on this type of scheduling problems. The problem of minimising makespan on parallel identical machines is in the standard scheduling notation of Graham et al. \cite{graham1979} denoted as $P||C_{\max}$ (see Section \ref{sec:epilogue} for an instruction on this notation). This problem $P||C_{\max}$, but then with job splitting and setup times that are job-dependent, but sequence- and machine-independent, is considered by Xing and Zhang~\cite{XingZ00}, and Chen, Ye and Zhang~\cite{chen2006lot}. Chen et al.~\cite{chen2006lot} mention that this problem is NP-hard in the strong sense, and only weakly NP-hard if the number of machines is assumed constant. Straightforward reductions from the {\sc 3-Partition} and {\sc Subset Sum} problem show that these hardness results continue to hold if setup times are uniform. Chen et al. provide a 5/3-approximation algorithm for this problem and an FPTAS for the case of a fixed number of machines. A PTAS for the version of $P||C_{\max}$ with preemption and job-dependent, but sequence- and machine-independent setup times was given by Schuurman and Woeginger~\cite{schuurman1999preemptive}. It remains open whether a PTAS exists with job splitting rather than preemption, even if the setup times are uniform. See~\cite{LiuC04} and~\cite{PottsvW92} for a more extensive literature on problems with preemption and setup times.

Our problem is related to scheduling problems with malleable tasks. A malleable task may be scheduled on multiple machines, and a function $f_j(k)$ is given that denotes the processing speed if $j$ is processed on $k$ machines. If $k$ machines process task $j$ for $L$ time, then $f_j(k)L$ units of task $j$ are completed.  What we call job splitting is referred to as malleable tasks with linear speedups, i.e., the processing time required on $k$ machines is $1/k$ times the processing time required on a single machine. We remark that job splitting with setup times is not a special case of scheduling malleable tasks, because of the discontinuity caused by the setup times. We refer the reader to Drozdowski~\cite{Drozdowski09} for an extensive overview of the literature on scheduling malleable tasks.

The main algorithmic result of our paper considers the job splitting variant of the problem of minimising the sum of completion times on identical machines, with uniform setup times:  given a set of $m$ identical machines, $n$ jobs with processing times $p_1, \ldots, p_n$, and a setup time $s$,
the objective is to schedule the jobs on the machines to minimise total completion time ($\sum C_j$) (where the chosen objective is inspired by the disaster relief application).
The version of this problem with ordinary preemption and fixed setup time $s$ is solved by the Shortest Processing Time first rule (SPT); the option of preemption is not used by the optimum. However, the situation is much less straightforward for job splitting. If $s$ is very large, then an optimal schedule minimises the contribution of the setup times to the objective, and a job will only be split over several machines if no other job is scheduled after the job on these machines. It is not hard to see that the jobs that are not split are scheduled in SPT order.
If $s$ is very small (say 0), then each job is split over all machines and the jobs are scheduled in SPT order. However, for other values of $s$, it appears to be a non-trivial problem to decide how to schedule the jobs, as splitting a job over multiple machines decreases the completion time of the job itself, but it increases the total load on the machines, and hence the completion times of later jobs.

Consider the following instance as an example. There are 3 machines and 6 jobs, numbered $1, 2, \ldots, 6$, with processing times $1, 2, 3, 5,  11, 12$, respectively, and setting up a machine takes 1 time unit. One could consider filling up a schedule in round-robin style, assigning the jobs to machine 1, 2, 3, 1, 2, 3, respectively. This schedule is given in the Gantt chart in Figure~\subref*{fig:example1}. The schedule has objective value $49$.

\begin{figure*}[h!t]
\begin{center}
\subfloat[]{\label{fig:example1}\noindent\resizebox{.7\textwidth}{!}{
\begin{ganttchart}[hgrid, inline, bar height = 1, bar top shift = 0, x unit = .7cm, y unit chart= .4cm]{16}
\input{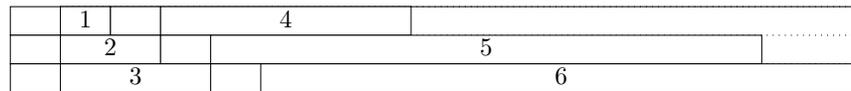}
\end{ganttchart}
}}\\
\subfloat[]{\label{fig:example2}\noindent\resizebox{.7\textwidth}{!}{
\begin{ganttchart}[hgrid, inline, bar height = 1, bar top shift = 0, x unit = .7cm, y unit chart= .4cm]{16}
\input{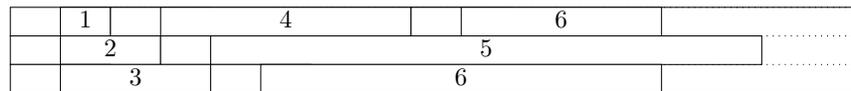}
\end{ganttchart}
}}\\
\subfloat[]{\label{fig:example3}\noindent\resizebox{.7\textwidth}{!}{
\begin{ganttchart}[hgrid, inline, bar height = 1, bar top shift = 0, x unit = .7cm, y unit chart= .4cm]{16}
\input{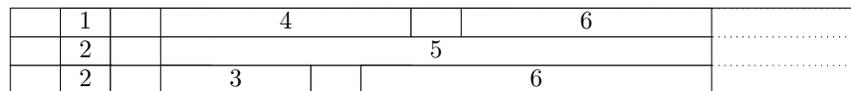}
\end{ganttchart}
}}\\
\subfloat[]{\label{fig:example4}\noindent\resizebox{.7\textwidth}{!}{
\begin{ganttchart}[hgrid, inline, bar height = 1, bar top shift = 0, x unit = .7cm, y unit chart= .4cm]{16}
\input{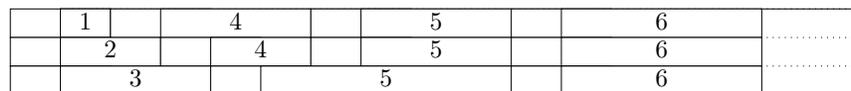}
\end{ganttchart}
}}
\caption{Gantt charts depicting the schedules for the instance described in Section~\ref{sec:P2}. The grey blocks indicate the setup times, the numbered blocks are scheduled job parts. Each row of blocks gives the schedule for a machine.}
\end{center}
\end{figure*}

By splitting job 6 over machines 1 and 3, instead of processing it on machine 3 only, we can lower the completion time of job 6, and this improves the objective value since there are no jobs scheduled after job 6. In fact, to get the best improvement in objective value, we make sure that both job parts of job 6 finish at the same time, see Figure~\subref*{fig:example2}. The objective value is of the schedule is $45$.

Splitting jobs early in the schedule, may increase the objective value, as (many) later jobs may experience delays. For example, if we choose to split job 2 over machines 2 and 3, we will cause delays for jobs 3 and 6, while improving the completion times of jobs 2 and 5. If we require that job parts of the same job end at the same time, we get the schedule pictured in Figure~\subref*{fig:example3} with objective value $46$. Figure~\subref*{fig:example4} depicts the optimal schedule with objective value $40$.

This illustrates the inherent trade-off in this problem mentioned earlier: splitting jobs will decrease the completion times of some jobs, but it also may increase the completion times of other jobs.

In Section~\ref{sec:prop2} we present a polynomial-time algorithm for the case in which there are two machines. The algorithm is based on a careful analysis of the structure of optimal solutions to this problem. Structures of optimal solutions that hold under any number of machines are presented in a preliminary section. Though a first guess might be that the problem would remain easy on any fixed number of machines, we will show by some examples in Section~\ref{sec:3machines} that nice properties, which make the algorithm work for the 2-machine case, fail to hold for three machines already. The authors are split between thinking that we have encountered another instance of Lawler's ``mystical power of twoness''~\cite{lenstra88}, a phrase signifying the surprisingly common occurance that problems are easy when a problem parameter (here the number of machines) is two, but NP-hard when it is three, or that we just lacked the necessary flash of insight to find a polynomial-time algorithm.
We present a constant-factor approximation algorithm for the general case with any number of machines in Section~\ref{sec:approx} and in Section~\ref{sec:PTAS} we give a polynomial-time approximation scheme for the case of a fixed number of machines. We leave the complexity of the problem (even for only three machines) as a tantalising open problem for the scheduling research community. We show in Section~\ref{sec:NPhard} that introducing weights for the jobs makes the problem NP-hard, already on 2 machines. We finish the paper by giving a table with the state of the art for other split scheduling problems with uniform setup times. We summarize whether they are known to be NP-hard or in P, and present the best known approximation ratios.

\section{Preliminaries}
\label{sec:P2}
An instance is given by $m$ parallel identical machines and $n$ jobs. Job $j$ has processing time $p_j$, for $j=1,\ldots,n$. Each job may be split into {\em parts} and multiple parts of the same job may be processed simultaneously. Before a machine can start processing a part of a job, a fixed setup time $s$ is required. During setup of a job (part) the machine cannot simultaneously process or setup another job (part). The objective is to minimise the sum of the completion times of the jobs (total completion time), which is equivalent to minimising the average completion time.

Here we derive some properties of an optimal schedule, which are valid for any number of machines. Some additional properties for the special case of two machines, presented in Section~\ref{sec:prop2}, will lead us to a polynomial-time algorithm for this special case. We show in Section~\ref{sec:3machines} that the additional properties that make the 2-machine case tractable do not hold for the case of three machines.

\label{sec:propm}
\begin{lemma}
\label{lem:1part}
Let $\sigma$ be a feasible schedule with job completion times $C_1\le C_2\le \dots\le C_n$. Let $\sigma'$ be obtained from $\sigma$ by rescheduling the job parts on each machine in order $1,2,\dots,n$. Then $C'_j\le C_j$ for $j=1,\dots,n$.
\end{lemma}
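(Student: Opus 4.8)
The plan is to prove the stronger per-machine statement: in $\sigma'$, on every machine that processes a part of job $j$, the processing of job $j$ finishes no later than $C_j$. Since $C'_j$ is the maximum of these per-machine finishing times over all machines carrying a part of job $j$, the lemma follows at once.

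First I would fix notation that is preserved when passing from $\sigma$ to $\sigma'$. For a machine $i$ and a job $k$, let $x_{ik}$ be the total processing time machine $i$ devotes to job $k$, and let $q_{ik}$ be the number of parts of job $k$ placed on machine $i$. Rescheduling only permutes the parts within each machine, so both quantities are identical in $\sigma$ and $\sigma'$. By construction, in $\sigma'$ machine $i$ runs all parts of job $1$, then all parts of job $2$, and so on, with no idle time and one setup of length $s$ per part. Hence, if machine $i$ processes job $j$, the processing of job $j$ on machine $i$ completes in $\sigma'$ at exactly $\sum_{k=1}^{j}(q_{ik}s + x_{ik})$.

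The crux is to bound this quantity using $\sigma$, and here I would exploit that the jobs are indexed by nondecreasing completion time: for every $k \le j$ we have $C_k \le C_j$, and since $C_k$ is the time at which the last part of job $k$ finishes over all machines, every part of job $k$ — in particular any part on machine $i$ — finishes by $C_k \le C_j$ in $\sigma$. Consequently, on machine $i$ all setups and all processing associated with jobs $1,\ldots,j$ are carried out entirely within the interval $[0,C_j]$. These activities never overlap on a single machine, so their total length cannot exceed the length of the interval: $\sum_{k=1}^{j}(q_{ik}s + x_{ik}) \le C_j$. Combining this with the previous paragraph shows that the processing of job $j$ on machine $i$ finishes by $C_j$ in $\sigma'$, for every machine $i$ that processes job $j$, and therefore $C'_j \le C_j$ for all $j$.

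I expect the only delicate point to be this load-counting step — specifically, the claim that all parts of jobs $1,\ldots,j$ (not merely of job $j$ itself) lie inside $[0,C_j]$ on each machine. This is exactly where the ordering $C_1 \le \cdots \le C_n$ is essential: in $\sigma$ a part of an earlier job $k<j$ could physically be placed after job $j$'s part on the same machine, yet it must still finish by $C_k \le C_j$, so it cannot escape the window $[0,C_j]$. The rest is bookkeeping, and no merging of parts is required — merging adjacent parts of the same job would only decrease the left-hand side and thus strengthen the inequality.
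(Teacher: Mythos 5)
Your proof is correct and follows essentially the same route as the paper's: both arguments reduce to the observation that, because $C_k\le C_j$ for all $k\le j$, every part of (and setup for) jobs $1,\dots,j$ residing on machine $i$ lies within $[0,C_j]$ in $\sigma$, so the total load $\sum_{k\le j}(q_{ik}s+x_{ik})$ --- which is exactly the finishing time of job $j$ on machine $i$ in $\sigma'$ --- is at most $C_j$. The only cosmetic difference is that the paper routes the bound through the last completion time $C_{ik}$ on machine $i$ among jobs $1,\dots,j$ and charges one setup per job per machine (implicitly merging adjacent parts), whereas you keep one setup per part and note that merging only helps; both are sound.
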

\begin{proof} Let $q_{ij}$ be the time that $j$ is processed on machine $i$ in $\sigma$ and let  $C_{ij}$ be the time that $j$ finishes on machine $i$. Let $y_{ij}=s+q_{ij}$ if $q_{ij}>0$ an let $y_{ij}=0$ otherwise. Fix some job $j$ and machine $i$. Let $k=\argmax\{C_{ik}\mid 1\le k\le j\}$.
Then $C_j\ge C_k\ge C_{ik}\ge \sum_{h=1}^j y_{ih}= C'_{ij}$, where the first inequality is by assumption and the last one by the fact that all work on jobs smaller than or equal to $j$ has been done on machine $i$ at time $C_{ik}$. Since  $C_j\ge C'_{ij}$ for any machine $i$ on which $j$ is scheduled the proof follows.
\end{proof}
The lemma above has several nice corollaries. First, note that if in an optimal schedule $C_1\le C_2\le \dots\le C_n$, then we maintain an optimal schedule with the same completion time for each job by scheduling the job parts on each machine in the order $1,2,\dots,n$. This allows to characterize an optimal schedule by a permutation of the jobs and the times that job $j$ is processed on each machine $i$. The optimal schedule is then obtained by adding a setup time $s$ for each non-zero job part and processing them in the order of the permutation on each machine. Consequently, in the optimal schedule obtained each machine contains at most one part of each job.

In the sequel, given a schedule, we use $M_j$ to denote the set of machines on which parts of job $j$ are processed.
We will sometimes say that a machine processes job $j$, if it processes a part of job $j$.

\begin{lemma}
\label{lem:SPT}
There exists an optimal schedule such that on each machine the job parts are processed (started and completed) in SPT order of the corresponding jobs.
\end{lemma}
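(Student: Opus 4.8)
The plan is to combine Lemma~\ref{lem:1part} with an exchange argument. By the corollary to Lemma~\ref{lem:1part}, I may restrict attention to schedules that are described by a permutation of the jobs together with processing amounts $q_{ij}$, where on every machine the parts appear in the order of the permutation and each machine holds at most one part of each job. Relabelling so that the permutation is $1,2,\dots,n$, this is also the order of the completion times $C_1\le\dots\le C_n$. A violation of the conclusion means that on some machine a longer job precedes a shorter one; since the per-machine order coincides with the global order, this forces an inversion $p_a>p_b$ with $a$ preceding $b$ in the permutation. I would pick such a pair that is \emph{adjacent} in the permutation and transpose $a$ and $b$ on every machine, then re-apply Lemma~\ref{lem:1part} to clean up.

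The first observation is that this transposition is local in its effect: on each machine the block occupied by the two parts keeps the same start $s_i$ (the completion time of the preceding job on machine $i$) and the same total length, so the completion time of every job other than $a$ and $b$ is unchanged, and it suffices to show $C'_a+C'_b\le C_a+C_b$. Writing $y_{ij}=s+q_{ij}$ if $q_{ij}>0$ and $y_{ij}=0$ otherwise, the four quantities are
\[
C_a=\max_{i\in M_a}(s_i+y_{ia}),\qquad C'_a=\max_{i\in M_a}(s_i+y_{ia}+y_{ib}),
\]
\[
C_b=\max_{i\in M_b}(s_i+y_{ia}+y_{ib}),\qquad C'_b=\max_{i\in M_b}(s_i+y_{ib}).
\]
Hence $C'_a\ge C_a$ and $C'_b\le C_b$, and the desired inequality is equivalent to
\[
\max_{i\in M_a}(s_i+y_{ia}+y_{ib})-\max_{i\in M_a}(s_i+y_{ia})\ \le\ \max_{i\in M_b}(s_i+y_{ia}+y_{ib})-\max_{i\in M_b}(s_i+y_{ib}),
\]
that is, inserting the part-profile of the \emph{shorter} job $b$ raises the block-end maximum over $M_a$ by at most as much as inserting the part-profile of the \emph{longer} job $a$ raises it over $M_b$.

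The hard part will be this last inequality, because the global comparison $p_a>p_b$ must be converted into the per-machine statement above. For a completely arbitrary feasible schedule it can in fact fail: if $b$ places a large part on a machine it shares with $a$ (so that $y_{ib}>y_{ia}$ there), the left-hand difference can exceed the right-hand one. I would therefore exploit optimality. Among all minimum-cost schedules in the reduced form, I would take one minimising the number of SPT-inversions; if a violating adjacent pair remained, I would show that the transposition can be accompanied, when necessary, by a \emph{rebalancing} of the parts of the longer job $a$ across $M_a$ (shifting load away from the machine attaining $\max_{i\in M_a}(s_i+y_{ia}+y_{ib})$) so that the increase of $C_a$ is no larger than the decrease of $C_b$. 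Such a move leaves the objective unchanged while strictly reducing the inversion count, contradicting the choice of schedule and proving that an optimal schedule with per-machine SPT order exists. An alternative route I would try in parallel is to first establish a balancing lemma—that some optimal schedule has all parts of each job completing simultaneously—since under balancing $\max_{i\in M_a}(\cdot)$ is attained uniformly and the per-machine inequality follows directly from $\sum_i q_{ia}=p_a>p_b=\sum_i q_{ib}$; the catch there is that balancing one job can delay later jobs, so its WLOG-ness must itself be argued. A final appeal to Lemma~\ref{lem:1part} after the exchange restores the clean per-machine ordering and ensures completion times only improve.
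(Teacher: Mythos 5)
You have correctly reduced the problem to killing an adjacent inversion $p_a>p_b$ with $a$ before $b$, and you have correctly identified exactly where the naive transposition breaks: the per-machine inequality comparing the two max-increments can fail when $b$'s parts are distributed very differently from $a$'s. Unfortunately, the step you propose to close this gap --- transposing $a$ and $b$ and then ``rebalancing the parts of the longer job $a$ across $M_a$'' --- does not work as described, and this is the crux of the whole lemma. Rebalancing $a$ changes the quantities $y_{ia}$, hence the length $y_{ia}+y_{ib}$ of the block occupied by the pair on each machine of $M_a$; this destroys the locality you rely on (``the completion time of every job other than $a$ and $b$ is unchanged'') and can delay every subsequent job on the machines that receive more of $a$'s load. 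You would then have to control the damage to all later jobs, which is precisely the global trade-off that makes this problem delicate. Your fallback (prove the balancing property first) has a dependency problem: in the paper, balancing (Lemma~\ref{lem:balanced}) is derived \emph{from} the SPT lemma via an LP argument, not the other way around, and you yourself note that its WLOG-ness is unclear at this stage.

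The missing idea is to exchange the two jobs \emph{within the time slots they already occupy} rather than transposing them in the sequence. The paper takes two jobs $j,k$ with $p_j<p_k$ but $C_j>C_k$ sharing a machine, removes both, packs the shorter job $j$ into the slots formerly used by $k$ (starting with machines of $M_k\setminus M_j$, which is what keeps the setup count from increasing), and puts $k$ into the remaining slots. Because the union of occupied slots on every machine is unchanged, no other job moves, and one gets $C'_j\le C_k$ and $C'_k\le\max\{C_j,C_k\}$ essentially for free --- no per-machine max inequality is needed. A secondary potential, minimizing $\sum_\ell p_\ell C_\ell$ among optimal schedules, then shows this situation cannot occur at all. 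What remains are pairs with $p_j<p_k$, $C_j\le C_k$, and $k$ immediately before $j$ on some single machine; there a purely local swap of the two adjacent parts on that one machine (same block, same start, same length) finishes the argument. Your write-up is missing both of these exchanges, so as it stands the proof has a genuine gap at its central step.
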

\begin{proof}
We consider an optimal schedule of the form of Lemma 1, and among such schedules, we choose the schedule that minimizes $\sum_j p_jC_j$.

We claim that for any two jobs $j,k$ with $p_j<p_k$ and $C_j>C_k$ that are processed on the same machine, the machine must process $j$ before $k$.
In this case $M_j\cap M_k \neq \emptyset$. We define a new schedule by rescheduling both jobs within the time slots these jobs occupy in the current schedule (including the slots for the setup times).
First remove both jobs. Then consider the machines in $M_k$ one by one, starting with the machines in $M_k\backslash M_j$ and fill up the slot previously used by job $k$, until we have completely scheduled job $j$ including the setup times. This is possible since $p_j<p_k$. We will show that job $k$ can be scheduled in the remaining slots.

Let $M'_j$ and $M'_k$ denote the set of machines occupied by $j$ and $k$, respectively, in the new schedule. We distinguish two cases. If job $j$ cannot be rescheduled completely in the slots used by $k$ in $M_k\setminus M_j$ then we have $M'_k\subseteq M_j$. Together with $M'_j\subseteq M_k$ it follows that $(M'_j\cap M'_k)\subseteq (M_j\cap M_k)$. Hence, any machine containing both $j$ and $k$ in the new schedule did also contain both jobs in the old schedule. Hence, there are no extra setups on any machine needed.

Now consider the case that job $j$ is rescheduled completely in the slots used by $k$ in $M_k\setminus M_j$. Then, after adding job $k$, the total number of setups needed for $j$ and $k$ does not increase since there is at most one machine of $M_k\setminus M_j$ containing both jobs in the schedule, but none of the machines in $M_j \cap M_k$ is used in the new schedule.

Let $C'$ denote the new completion times. We have $C'_j \leq C_k$ and $C'_k \le \max\{C_j, C_k\}$, since in the new schedule $j$ is processed only where job $k$ was processed in the old schedule, and job $k$ is processed in the new schedule only where either job $j$ or job $k$ was processed in the old schedule.
For all other jobs, the completion time remains the same. Now, by assumption, we have that $C_j>C_k$, and hence $C'_k\le C_j$. Therefore, the sum of completion times did not increase, and $\sum_\ell p_\ell C'_\ell < \sum_\ell p_\ell C_\ell$, which contradicts the choice of the original schedule.

Hence, if there exist jobs $j,k$ such that $p_j<p_k$ and there exists some machine $i$ that processes job $k$ before $j$, then it must be the case that $C_j\le C_k$. If such jobs $j,k$ exist, then there also exist such jobs $j,k$ and a machine $i$ such that $k$ is processed immediately before $j$ on machine $i$. Now, reversing these parts in the schedule does not increase the completion of jobs $j$ and $k$ and does not effect the schedule of the other jobs. Moreover, we decrease the number of triples $i,j,k$ such that $p_j<p_k$ and machine $i$ processes $k$ before $j$.
Repeating this procedure gives an optimal schedule that satisfies the lemma.
\end{proof}

From now on, assume that jobs are numbered in SPT order, i.e., $p_1\le \dots \le p_n$.
Given a schedule, we call a job {\em balanced} if it completes at the same time on all machines on which it is processed.
\begin{lemma}
\label{lem:balanced}
There exists an optimal schedule in which all jobs are balanced.
\end{lemma}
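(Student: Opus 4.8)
The plan is to start from the structural results already established and peel off jobs one at a time, largest first. By Lemma~\ref{lem:1part} and Lemma~\ref{lem:SPT} I may assume the optimal schedule I begin with processes the jobs in SPT order on every machine, with at most one part of each job per machine; in particular the largest job $n$ is the last job on every machine of $M_n$. The first observation is that balancing the \emph{last} job is essentially free: since $C_n=\max_{i\in M_n}C_{in}$ and the split of job $n$ among its machines influences no completion time other than $C_n$, I can redistribute its $p_n$ units by a water-filling (level) argument so that all machines carrying a positive part finish job $n$ at a common level $T$, dropping from $M_n$ any machine whose ready time already exceeds $T$. This step cannot increase $\max_i C_{in}$, hence cannot increase the objective, and it leaves jobs $1,\dots,n-1$ untouched, so job $n$ becomes balanced at no cost.

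To balance the remaining jobs I would set up an induction, the natural vehicle being the statement ``for arbitrary machine ready times $R_1,\dots,R_m$, the SPT-ordered problem on a given set of jobs has a balanced optimal schedule,'' proved by induction on the number of jobs. The key structural fact that makes the induction coherent is that all parts of jobs $1,\dots,j-1$ lie entirely before the parts of jobs $j,\dots,n$ on each machine; consequently, rescheduling only the \emph{tail} $\{j,\dots,n\}$ within the region after the completion times of job $j-1$ never disturbs the smaller jobs. Thus if $j$ is the largest job that is still unbalanced, the portion of the schedule carrying jobs $j,\dots,n$ must itself be optimal for the induced tail subproblem (with ready times $R_i$ inherited from the smaller jobs); otherwise the whole schedule could be improved. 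Invoking the inductive hypothesis on this tail subproblem, which has strictly fewer jobs, would let me replace the tail by a balanced optimal one of equal objective value, thereby balancing job $j$ while keeping jobs $j+1,\dots,n$ balanced and jobs $1,\dots,j-1$ fixed, and so decreasing the index of the largest unbalanced job.

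The main obstacle is the inductive step in which a \emph{single} job must be balanced on top of an already-balanced tail, because redistributing that job's load couples back into the rest of the schedule. Concretely, shifting $\delta$ units of the job from a machine $a$ where it finishes late to a machine $b$ where it finishes early advances every later (hence larger) part on $a$ by $\delta$ but delays every later part on $b$ by $\delta$; the delay on $b$ is exactly the term that threatens to increase the total completion time, and it is what prevents a naive one-job exchange from being monotone. I would resolve this by performing the redistribution as a water-filling that equalizes the finish times of the job in question, and by allowing a machine to be dropped from $M_j$ when the equalizing level would force its part to start before its ready time permits; the remaining content is to show that such an equalizing (load-balancing) perturbation of the ready times seen by the larger jobs does not increase the optimal cost of scheduling them. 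I expect this to follow from the convexity of the tail's optimal cost as a function of the ready times together with a local swap argument using the SPT order, and I would, if needed, select among all optimal SPT schedules one minimizing a secondary potential (such as $\sum_i L_i^2$, with $L_i$ the total load on machine $i$) so that any objective-preserving move toward balance strictly decreases the potential and the process must terminate at a fully balanced schedule.
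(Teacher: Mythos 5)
Your overall strategy---peel off the largest unbalanced job and water-fill it, then re-optimize the tail by induction---founders exactly at the step you yourself flag as ``the main obstacle,'' and the tools you propose to close it do not work. The claim you need is that equalizing the ready times seen by the larger jobs does not increase the optimal cost of scheduling them, which you propose to derive from convexity (equivalently, by symmetry, Schur-convexity) of the tail's optimal cost $f(R_1,\dots,R_m)$ in the ready-time vector. This function is not convex and not Schur-convex. Take $m=2$, $s=1$, and a single tail job with $p=2$: then $f(0,10)=3$ (schedule it unsplit on the free machine), while $f(5,5)=7$ (split, both parts finishing at time $7$), so the value at the average exceeds the average of the values, and the majorized vector $(5,5)$ gives a strictly \emph{larger} cost than $(0,10)$. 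Such configurations do arise from water-filling: if job $j$ has a sliver on a lightly loaded machine and the bulk on a heavily loaded one, equalizing its finish times shifts load onto the light machine and can strictly delay every tail job that would have used it. In such cases the objective is saved only because the decrease in $C_j$ outweighs the increase in the tail, so the gain on job $j$ and the loss on the tail must be compared \emph{jointly}; your decomposition into ``balance $j$ at no cost to $j$'' plus ``tail cost is monotone in the perturbation'' is exactly the part that is false, and the fallback (a secondary potential $\sum_i L_i^2$) only helps you terminate among objective-preserving moves, not establish that the balancing move preserves the objective in the first place. (Your first step, balancing the last job, is fine, as is the observation that the tail can be rescheduled without disturbing the prefix.)

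For contrast, the paper avoids any exchange argument here: it fixes an optimal schedule with the minimum number of job parts, fixes the sets $M_j$, and writes the remaining optimization as the linear program \eqref{e:LP} with variables $x_{ij}$ and $C_j$. A basic optimal solution has at most as many nonzero variables as linearly independent tight constraints; minimality of the number of parts forces all $n+\sum_j|M_j|$ variables to be nonzero, and since the LP has only $n+\sum_j|M_j|$ constraints besides nonnegativity, every load constraint $\sum_{k\le j,\,i\in M_k}(s+x_{ik})\le C_j$ must be tight---which is precisely balancedness of every job simultaneously. If you want to salvage your route, you would need a genuinely joint estimate of the form ``(decrease in $C_j$) $\ge$ (increase in optimal tail cost)'' for the water-filling move; the LP argument sidesteps the need for any such estimate.
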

\begin{proof}
Consider an optimal schedule with a minimum number of job parts. Let $C_j^*$ be the completion time of $j$ in this schedule and define $M_j$ for this schedule as before.
Consider the following linear program in which there is a variable $x_{ij}$ for all pairs $i,j$ with $i\in M_j$, indicating the amount of processing time of job $j$ assigned to machine $i$:
\begin{equation}
\label{e:LP}
\begin{array}{cll}
\mbox{min} & \displaystyle \sum_j C_j \\
\mbox{s.t.}& \displaystyle  \sum_{i\in M_j} x_{ij} = p_j, & \forall j=1, \ldots, n\\
& \displaystyle  \sum_{k\le j:\ i\in M_k} (s+x_{ik}) \le C_j, & \forall j=1, \ldots, n,\ \forall i\in M_j \\
&  x_{ij}\ge 0, C_j\ge 0, & \forall j=1, \ldots, n,\ \forall i\in M_j.
\end{array}
\end{equation}
Note that a schedule that satisfies Lemmas 1 and 2 gives a feasible solution to the LP, and on the other hand that any feasible solution to the LP gives a schedule with total completion time at most the objective value of the LP: if there exist some $j$ and $i\in M_j$ such that $x_{ij}=0$, then the LP objective value is at least the total completion time of the corresponding schedule, as there is no need to set up for job $j$ on machine $i$ if $x_{ij}=0$.
We know that a solution is a basic solution to this LP, only if the number of variables that are non-zero is at most the number of linearly independent tight constraints (not including the non-negativity constraints). By the minimality assumption on the optimal schedule, in any optimal solution to the LP all $C_j$ and $x_{ij}$ variables are non-zero, which gives a total of $n+\sum_j |M_j|$ variables. Since there are only $n+\sum_j |M_j|$ constraints, all constraints must be tight, which proves the lemma.
\end{proof}

\section{An $O(n\log n)$ time algorithm for two machines}
\label{sec:prop2}
Given a feasible schedule, we call a job $j$ a $d$-job, if $|M_j|=d$. In this section we assume that the number of machines is two.
\begin{lemma}\label{lem:1jobs}
Let $\sigma$ be an optimal schedule for a 2-machine instance and let $j<k$ be two consecutive 2-jobs. If there are 1-jobs between $j$ and $k$, then there is at least one 1-job on each machine. Also, the last 2-job is either not followed by any job or is followed by at least one 1-job on each machine.
\end{lemma}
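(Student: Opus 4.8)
The plan is to work with an optimal schedule $\sigma$ satisfying the previous three lemmas, so that on each machine the parts appear in SPT order and every job is balanced; on two machines each job is then either a $1$-job or a $2$-job. Fix two consecutive $2$-jobs $j<k$, so every job strictly between them (in index) is a $1$-job. The starting observation is that a balanced $2$-job leaves \emph{both} machines free at its completion time; hence both machines are idle exactly at $C_j$, and on the machine carrying no intervening $1$-job, $k$ is set up immediately after $j$. A second, repeatedly used fact is a closed form for the completion of a balanced $2$-job: if its two machines become free (after all earlier work and setups) at times $f_1$ and $f_2$, then balancing gives $C_k=\tfrac12(f_1+f_2+p_k)+s$, so $C_k$ depends on $f_1,f_2$ only through the sum $f_1+f_2$. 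Every argument below is an exchange argument: assuming the conclusion fails, I exhibit a feasible schedule of strictly smaller total completion time, contradicting optimality.

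For the first statement I suppose, for contradiction, that all $1$-jobs $\ell_1<\dots<\ell_r$ strictly between $j$ and $k$ lie on machine $1$. If $r\ge 2$, I move the smallest, $\ell_1$, to machine $2$, inserting it before $k$ there. This transfers the block $s+p_{\ell_1}$ from the load preceding $k$ on machine $1$ to that on machine $2$, so $f_1+f_2$ is unchanged and, by the closed form, $C_k$ is unchanged; the completion of $\ell_1$ equals $C_j+s+p_{\ell_1}$ both before and after; and $\ell_2,\dots,\ell_r$ each start $s+p_{\ell_1}$ earlier, so their completions strictly decrease. This is a strict improvement, and it also leaves a $1$-job on each machine.

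The remaining possibility is $r=1$: a single $1$-job $\ell$, necessarily on one machine only. This is the crux, because a local re-optimisation of $\{\ell,k\}$ does \emph{not} help when $p_\ell<s$ (the uniform setup penalises the extra split); the improving move must instead \emph{un-split} the earlier $2$-job. Letting $B_1\le B_2$ be the machine-free times just before $j$, I replace the split of $j$ by placing $j$ and $\ell$ as $1$-jobs on the two separate machines (each starting at that machine's free time), while keeping $k$ balanced over both machines. A direct computation shows the new total completion time of $\{j,\ell,k\}$ is smaller by exactly $\tfrac32 s$, independently of $B_1,B_2$, with all other jobs untouched. Feasibility of the new balanced split of $k$ reduces to $B_2-B_1\le p_j$, which holds because $j$ was a genuine balanced $2$-job in $\sigma$. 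Hence $r=1$ cannot occur, so whenever there are intervening $1$-jobs there are at least two, split across the machines.

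For the second statement, let $k$ be the last $2$-job; every later job is a $1$-job, and both machines are free at $C_k$. If later jobs exist but all lie on machine $1$, then machine $2$ is idle from $C_k$ onward. Taking $\ell$ to be the very last job (a $1$-job on machine $1$ finishing at $f+s+p_\ell$ for some $f\ge C_k$) and rebalancing it over both machines via the closed form with free times $f$ and $C_k$ (or moving it entirely to machine $2$ if the balanced split would be infeasible) strictly lowers $C_\ell$ while affecting no other job, the desired contradiction; so the later jobs occupy both machines, or there are none. The main obstacle throughout is precisely the $r=1$ corner case: since splitting costs a setup, the natural local swap only breaks even, forcing one to discover the global un-splitting move and to check that every rebalanced part stays nonnegative.
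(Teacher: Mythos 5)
Your proof is correct, but it is organised differently from the paper's. The paper handles all cases with a single uniform exchange: it un-splits $j$ onto the machine that carries no intervening $1$-jobs (the one where $j$ started earlier), advances the intervening $1$-jobs to $j$'s old start time on the other machine, and shifts a small amount of $k$'s work between machines so that $C_k$ is unchanged; the completion of $j$ rises by the amount $q_1\le \tfrac12 p_j$ that was moved, the first $1$-job drops by $s+q_1$, and the net saving is at least $s$ for every $r\ge 1$, and the same move settles the last-$2$-job clause. You instead split into cases: for $r\ge 2$ you avoid un-splitting altogether and just transfer the smallest intervening $1$-job to the empty machine (a lighter move, and your feasibility check $0\le f_1'-f_2'\le f_1-f_2\le p_k$ for rebalancing $k$ is the right one), while your $r=1$ move is essentially the paper's un-splitting exchange, with the cosmetic difference that you rebalance $k$ to gain an extra $s/2$ rather than keeping $C_k$ fixed; your $\tfrac32 s$ computation and the feasibility bound $(B_2-B_1)+(p_\ell-p_j)<p_k$ both check out. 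For the final clause you use yet another (valid) argument, rebalancing only the very last job. What the paper's version buys is uniformity -- one move, no case split, and the last-$2$-job clause for free; what yours buys is that the generic case ($r\ge 2$) needs no rebalancing of $j$ at all and isolates the genuinely delicate configuration $r=1$. One presentational caveat: the lemma as stated speaks of an arbitrary optimal schedule, and both your argument and the paper's implicitly use that $j$ and $k$ are balanced and SPT-ordered; you state this assumption explicitly up front, which is the correct reading given how the lemma is used, so this is not a gap.
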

\begin{proof}
Let $j$ and $k$ be two consecutive 2-jobs and assume there is at least one in-between 1-job on machine $1$ and none on machine 2.  Let $s_1,s_2$ be the start time of  job $j$ on respectively machine 1 and 2. We may assume without loss of generality that $s_1\ge s_2$: otherwise we just swap the schedules of the two machines for the interval $[0,C_j]$ and get the inequality. We change the schedule of $j$ and $k$ and the in-between 1-jobs as follows. Job $j$ is completely processed on machine 2, starting from time $s_2$,  and the in-between 1-jobs are moved forward such that the first starts at time $s_1$. Since $p_j\le p_k$, and the part of job $j$ we moved to machine 2 is at most $\frac{1}{2}p_j$, whereas the part of job $k$ that was on machine $j$ is at least $\frac{1}{2}p_k$, we can still schedule job $k$ such that its completion time remains the same. The total completion time is reduced by at least $s$ since $j$ needs only one setup time now. If $j$ is the last 2-job then we can make the same adjustment.
\end{proof}

\begin{lemma}
In the case of two machines there are no 1-jobs after a 2-job in an optimal schedule satisfying the properties of Lemmas~\ref{lem:1part}, \ref{lem:SPT} and \ref{lem:balanced}.
\end{lemma}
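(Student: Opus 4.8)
The plan is to argue by contradiction through local exchanges. Suppose an optimal schedule $\sigma$ satisfying Lemmas~\ref{lem:1part}--\ref{lem:balanced} has a 1-job after some 2-job, and let $j$ be the \emph{last} 2-job of $\sigma$. Then every job after $j$ is a 1-job and, since at least one such job exists, Lemma~\ref{lem:1jobs} guarantees at least one 1-job after $j$ on \emph{each} machine. Write $q_1,q_2>0$ with $q_1+q_2=p_j$ for the two parts of $j$; as $j$ is balanced it finishes at a common time $C_j$ on both machines. Let $A$ and $B$ be the 1-jobs scheduled after $j$ on machine~$1$ and machine~$2$, and set $a=|A|\ge 1$, $b=|B|\ge 1$; every job of $A\cup B$ has length at least $p_j$ because it follows $j$ in SPT order. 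I will exhibit a schedule of smaller total completion time, contradicting optimality.

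The primary move is to \emph{un-split} $j$. Deleting $j$'s part on machine~$1$ and attaching it to its part on machine~$2$ turns $j$ into a single 1-job on machine~$2$ and removes one setup: the whole tail $A$ shifts earlier by $s+q_1$, while $C_j$ and the whole tail $B$ are delayed by $q_1$. The resulting change in $\sum_\ell C_\ell$ is $q_1(1+b-a)-a\,s$, and the mirror move (un-splitting $j$ onto machine~$1$) changes it by $q_2(1+a-b)-b\,s$. If $a>b$ the first expression is at most $-a\,s<0$ and if $b>a$ the second is negative, so $\sigma$ is strictly improved unless $a=b$; and when $a=b$ the better of the two moves changes the objective by $\min\{q_1,q_2\}-a\,s$, which is negative unless $\min\{q_1,q_2\}\ge a\,s$. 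Only the case $a=b$ and $\min\{q_1,q_2\}\ge a\,s$ survives.

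In this surviving case the tail jobs are long: each has length at least $p_j=q_1+q_2\ge 2\min\{q_1,q_2\}\ge 2a\,s$. The correct move now is to transfer the split from the short job $j$ to a long tail job: un-split $j$ and instead split a tail job (for instance the last job on one machine), which lowers that job's completion by roughly half its length at the cost of a single extra setup and is therefore beneficial precisely because the tail lengths exceed $2a\,s\ge 2s$. Already for $a=b=1$ one checks directly that un-splitting $j$, running $j$ and the smaller tail job as parallel 1-jobs, and splitting the larger tail job, strictly beats $\sigma$. I expect the crux to be making this rigorous for general equal tails ($a=b\ge 2$): splitting an \emph{interior} tail job delays its successors on the same machine, so the gain is not a single-job computation, and I would handle it by re-optimising the two equal-length tails together as one sub-instance with both machines released at $C_j$. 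To rule out the measure-zero ties where the un-split move is only weakly improving, I would, as in the proof of Lemma~\ref{lem:SPT}, fix in advance an optimal schedule that also minimises a secondary quantity, so that the exchange produces a strict decrease in the objective or in that secondary measure.
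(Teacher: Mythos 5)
Your proposal has a genuine gap: the one case that survives your two un-splitting moves, namely $a=b$ with $\min\{q_1,q_2\}\ge a\,s$, is exactly the hard case, and you only verify it for $a=b=1$; for $a=b\ge 2$ you defer to an unspecified ``re-optimisation of the two equal-length tails as a sub-instance'', which is not a proof (and is itself a two-machine split-scheduling problem of the very kind whose structure is being established). There is also a flaw in the setup: you take $j$ to be the \emph{last} 2-job and assert that a 1-job after it exists, but the hypothesis only provides a 1-job after \emph{some} 2-job --- the last 2-job may be followed by nothing while an earlier 2-job is followed by 1-jobs. If you repair this by taking the last 2-job that \emph{is} followed by a 1-job, then later 2-jobs may exist, and your accounting $q_1(1+b-a)-as$, which treats the two machine tails as shifting independently, no longer applies.

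The paper avoids both problems with a purely local exchange that your $a=b=1$ computation already contains. Take any 2-job $j$ directly followed by a 1-job; Lemma~\ref{lem:1jobs} gives 1-jobs $h$ and $k$ immediately after $j$ on machines 1 and 2, with $p_j\le p_h\le p_k$. Normalising $j$'s start to $0$ with $x_{1j}\ge x_{2j}$ and $\Delta=x_{1j}-x_{2j}$, reschedule only these three jobs inside the time slots they already occupy: $j$ unsplit on machine 1, $h$ unsplit on machine 2, and $k$ split behind them so that it finishes at time $C_h$ on one machine and $C_k$ on the other (a capacity count shows this fits exactly). Both machines are then occupied up to precisely the same times as before, so no later job moves, and the sum of the three completion times drops from $\Delta+p_j+3s+p_h+C_k$ to $\Delta+p_j+2s+p_h+C_k$, i.e., by exactly $s>0$, contradicting optimality with no case distinction on $a$, $b$, or $q_1,q_2$. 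The point you missed is that this exchange only touches the \emph{first} 1-job on each machine, so it applies verbatim whatever follows; no global un-splitting of $j$, tail-shifting, or tie-breaking by a secondary measure is needed.
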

\begin{proof}
Suppose the lemma is not true.  Then there must be a 2-job $j$ that is directly followed by a 1-job. By Lemma \ref{lem:1jobs}, there must be at least one such 1-job on each machine, say jobs $h$ and $k$. Assume without loss of generality that $p_h \le p_k$. Let $x_{1j}, x_{2j}$ be the processing time of $j$ on machine 1 and 2, respectively.  As argued before, without loss of generality we assume that $x_{j1} \ge x_{j2}$. Let us define the starting time of $j$ as zero, and let $\Delta = x_{1j} - x_{2j}$.
Note that $C_j = \frac 12 (\Delta + p_j+2s)$.
Then, the sum of the three completion times is
\begin{eqnarray}
\label{eq:beforeswitch}
C_j + C_h+C_k &=& C_j + (C_j + p_h+s) + C_k  \nonumber \\
              &=& \Delta + p_j + 2s + p_h + s + C_k.
\end{eqnarray}
We reschedule the jobs $j,h,k$ as follows, the remaining schedule stays the same. Place
job $j$, the shortest among $j,k,h$, on machine 1 (unsplit), job $h$ on machine 2 (unsplit), and behind these two, job $k$ is split on machine 1 and 2, in such a way that it completes on one machine at time $C_h$ and time $C_k$ on the other.
The sum of the completion times of the three jobs becomes
\[(p_j +s) + (\Delta+ p_h +s) + C_k,\]
which is exactly $s$ less than the sum of the three completion times before the switch in (\ref{eq:beforeswitch}).
\end{proof}

\label{sec:polytime}
Given the previous lemmas, we see that the 2-jobs are scheduled in SPT order at the end.
By Lemma \ref{lem:SPT}, the first 2-job, say job $k$, is not shorter than the preceding 1-jobs. But this implies that the 1-jobs can be scheduled in SPT order
without increasing the completion time of job $k$ and the following jobs.
By considering each of the $n$ jobs as the first 2-job, we immediately obtain a $O(n^2)$ time algorithm to solve the problem. Carefully updating consecutive solutions leads to a faster method.

\begin{theorem}
There exists an $O(n\log n)$ algorithm for minimising the total completion time of jobs on two identical parallel machines with job splitting and uniform setup times.
\end{theorem}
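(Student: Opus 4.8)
The plan is to turn the structural lemmas into a clean combinatorial description of an optimal schedule and then to optimise over a single one-dimensional parameter. By Lemmas~\ref{lem:1part}, \ref{lem:SPT} and~\ref{lem:balanced}, together with Lemma~\ref{lem:1jobs} and the lemma immediately preceding the theorem, there is an optimal schedule in which every job is balanced, the parts on each machine are in SPT order, and no $1$-job follows a $2$-job. Such a schedule therefore consists of a prefix of $1$-jobs (each run on a single machine) followed by a suffix of $2$-jobs (each split over both machines), with both groups in SPT order, so it is completely determined by a single \emph{split point} $k$: jobs $1,\dots,k-1$ are $1$-jobs and jobs $k,\dots,n$ are $2$-jobs. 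First I would fix $k$ and find the best schedule of this shape, and then minimise over the $n+1$ choices of $k$.

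For a fixed $k$, write $L(k)=\sum_{i<k}(s+p_i)$ for the total load contributed by the $1$-jobs; crucially $L(k)$ does not depend on how the $1$-jobs are distributed over the two machines, since each contributes exactly $s+p_i$ to whichever machine runs it. Once the machines carry loads $L_A,L_B$ with $L_A+L_B=L(k)$, balancing the first $2$-job (possible exactly when $|L_A-L_B|\le p_k$) synchronises the two machines, after which each later $2$-job is split evenly, giving
\[
C_j=\tfrac12 L(k)+\sum_{i=k}^{j}\bigl(s+\tfrac12 p_i\bigr),\qquad j\ge k.
\]
Hence the total completion time of the $2$-jobs equals $\tfrac{n-k+1}{2}L(k)+D(k)$ with $D(k)=\sum_{i=k}^{n}(n-i+1)(s+\tfrac12 p_i)$, which is \emph{constant} once $k$ is fixed. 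Minimising the overall objective for this $k$ therefore reduces to minimising the total completion time of the $1$-jobs on two machines: the classical $\sum C_j$ problem, solved by placing the $1$-jobs in SPT order in round-robin fashion. This gives the closed form $A(k)=\sum_{i<k}(s+p_i)\,\lceil (k-i)/2\rceil$ and the value $F(k)=A(k)+\tfrac{n-k+1}{2}L(k)+D(k)$ to be minimised over $k$, which already yields the $O(n^2)$ algorithm.

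To reach $O(n\log n)$ I would sort the jobs once in SPT order and then sweep $k$ from $1$ to $n+1$, maintaining $L(k)$, $D(k)$ and $A(k)$ under the unit change of the boundary. The quantities $L(k)$ and $D(k)$ update in $O(1)$ by adding or removing the single term for job $k$. For $A(k)$ the key observation is that turning job $k$ into a $1$-job raises the multiplier $\lceil (k-i)/2\rceil$ of an old job $i$ by exactly one precisely when $i\equiv k \pmod 2$, so that
\[
A(k+1)-A(k)=(s+p_k)+\!\!\sum_{\substack{i<k\\ i\equiv k\,(\mathrm{mod}\,2)}}\!\!(s+p_i);
\]
keeping one running sum of $s+p_i$ over even indices and one over odd indices lets each step run in $O(1)$. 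The sweep is then $O(n)$ and the sort dominates, giving $O(n\log n)$ overall.

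The main obstacle I anticipate is the balancing feasibility hidden in the reduction: the $2$-job identity requires $|L_A-L_B|\le p_k$, whereas the cost-minimising round-robin assignment only guarantees imbalance at most $s+p_{k-1}$. The plan is to show that at any split point that can be optimal this constraint is met — intuitively, a large $s$ forces many $1$-jobs, whose round-robin loads telescope to a small imbalance, while for small $s$ the bound $s+p_{k-1}$ is itself close to $p_k$ — or, failing a clean argument of this form, to compute for each $k$ the minimum $1$-job completion time subject to $|L_A-L_B|\le p_k$ and to verify that this constrained optimum differs from the unconstrained one only in the placement of the single largest $1$-job, so that it too can be maintained in $O(1)$ along the sweep. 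Settling this balancing point, rather than the bookkeeping of the running sums, is where the real care is needed.
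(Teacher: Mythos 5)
Your reduction is essentially the paper's: the structural lemmas give an optimal schedule consisting of an SPT prefix of $1$-jobs followed by an SPT suffix of balanced $2$-jobs, and after one sort all $n+1$ split points are evaluated in linear total time. The only mechanical difference is in the sweep. You maintain the exact cost $F(k)=A(k)+\frac{n-k+1}{2}L(k)+D(k)$ with parity-indexed running sums (your update rule for $A$ is correct), whereas the paper observes that toggling the two jobs $k-1$ and $k$ together makes the machine-imbalance terms cancel, so that the difference between the schedules with $k$ and with $k-2$ unsplit jobs collapses to $f(k)=(n-k+1)s-p_k/2$, a monotone quantity whose sign change pins down the optimal split point up to a constant number of candidates. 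Both routes are $O(n)$ after sorting.

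The balancing issue you flag is genuine --- the paper's formulas for the completion times of the split jobs silently assume feasibility as well --- but it closes with one observation and does not require your fallback constrained-assignment computation. Let $a_{k-1}$ be the load imbalance after round-robin SPT of the $1$-jobs $1,\dots,k-1$. Comparing split points $k$ and $k+1$ job by job gives
\[
F(k)-F(k+1)=\tfrac12\bigl(a_{k-1}-p_k\bigr)+\tfrac12\,(n-k)\,s,
\]
and balancing job $k$ on top of the prefix is infeasible precisely when $a_{k-1}>p_k$; hence every infeasible split point satisfies $F(k)>F(k+1)$ and is never the minimiser. (When $k-1$ is even the question does not even arise, since the round-robin imbalance telescopes to $a_{k-1}\le p_{k-1}-p_1<p_k$.) Since each $F(k)$ is in any case a lower bound on the cost of every schedule with that split point --- $A(k)$ by SPT optimality for the $1$-jobs, and the $2$-job term because $C_j$ is at least half the total load of jobs $1,\dots,j$ --- the minimum of $F$ is attained at a feasible $k$ and is achieved by an actual schedule. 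With this remark your argument is complete and matches the paper's.
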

\begin{proof}
Suppose we schedule
the first $k$ jobs (for any $1\leq k \leq n$) in SPT order as 1-jobs and the other jobs in SPT order as 2-jobs.
We would like to compute the change in objective value that results from changing job $k$ from a 1-job to a 2-job. However, this happens to give a rather complicated formula. It is much easier to consider the change for job $k-1$ and $k$ simultaneously.

The schedule for the 1-jobs $j < k - 1$ does not change.
To facilitate the exposition, suppose that job $k-1$ starts at time
zero and job $k$ starts at time $a$. Then $C_{k-1}+C_k = p_{k-1}+s+a+p_k +s$. After changing the jobs to 2-jobs, the new completion
times become $C'_{k-1} = (a + p_{k-1} + 2s)/2$ and $C'_k = (a + p_{k-1} + p_k + 4s)/2$.
Hence,
\[ C'_{k-1} + C'_k - C_{k-1} - C_k = s- p_k/2. \]
In addition, each job $j>k$ completes $s$ time units later. Hence, the total increase in objective value due to changing both job $k-1$ and $k$ from a 1-job to a 2-job is
\[ f(k):=(n-k+1)s-p_k/2.\]
Notice that $f(k)$ is decreasing in $k$, since $s>0$ and $p_k$ is nondecreasing in $k$.
Hence, either there exists some $k\in \{2, \ldots, n\}$ such that $f(k)< 0$ and $f(k-1)\ge 0$, or either $f(n) \ge 0$, or $f(2)<0$.

Suppose there exists some $k\in \{2, \ldots, n\}$ such that $f(k)< 0$ and $f(k-1)\ge 0$. The optimal schedule is to have either $k-1$ or $k-2$ unsplit jobs, since the first inequality and monotonicity implies that a schedule with $k-2$ unsplit jobs has a better objective value than a schedule with $k$ or more unsplit jobs, and the second inequality and monotonicity implies that a schedule with $k-1$ unsplit jobs has a better objective value than a schedule with $k-3$ or fewer unsplit jobs.

If $f(n) \ge 0$ then the optimal solution is either to have only 1-jobs or have only job $n$ as
a 2-job.
If $f(2)< 0$ then the optimal solution is either to have only 2-jobs or have only
job 1 as a 1-job.

Straightforward implementation of the above gives the desired algorithm, the running time of which is dominated by sorting the jobs in SPT order.
\end{proof}

\section{Troubles on more machines}
\label{sec:3machines}
The properties exposed in Section~\ref{sec:propm} have been proved to hold for any number of machines. The properties presented in Section~\ref{sec:prop2} are specific for two machines only. In this section we investigate their analogues for three and more machines. We will present some examples of instances that show that the extension is far from trivial. It keeps the complexity of the problem on three and more machines as an intriguing open problem.

Consider the instance on three machines having 10 jobs with their vector of processing times $p = (3,10,10,10,$ $10,50,50,50,50,50)$ (1 small job, 4 middle size jobs and 5 large jobs) and $s=0.7$. An optimal solution is depicted in Figure~\subref*{fig:inst1}. As we see, job 2 is split over machines 2 and 3, but job 3 starting later than job 2 is not split. Jobs 4 and 5 are again what we call 2-jobs and are split over machines 2 and 3. The large jobs are all split over all three machines.

We will consider two solutions to be the same, if one solution can be obtained from the other by a relabelling of machines, and/or (repeatedly) swapping the schedule of two machines from some time $t$ till the end of the schedule, if these two machines both complete processing of some job at time $t$. Also we will assume without loss of generality, that all processing times are distinct, by slightly perturbing the processing times if necessary, obtaining $p_j < p_{j+1}$ for all $j = 1,2,\ldots,n-1$.

The solution depicted in Figure~\subref*{fig:inst1} is not the unique optimal solution for this instance. The Gantt charts of the other three optimal solutions are given in the other three subfigures of Figure~\ref{fig:allopt}.  However, all optimal solutions for this instance share the property that job 2 is a 2-job, and either job 3 or job 4 is a 1-job. From this example we see that an optimal schedule does not necessarily have the property that $| M_j |$ is monotone in $j$.

\begin{figure*}[h!t]
\begin{center}
\subfloat[]{\label{fig:inst1}\noindent\resizebox{.7\textwidth}{!}{
\begin{ganttchart}[hgrid, inline, bar height = 1, bar top shift = 0, x unit = .13cm, y unit chart= .5cm]{102}
\input{GanttChart1}
\end{ganttchart}
}} \\
\subfloat[]{\label{fig:opt2}\noindent\resizebox{.7\textwidth}{!}{
\begin{ganttchart}[hgrid, inline, bar height = 1, bar top shift = 0, x unit = .13cm, y unit chart= .5cm]{102}
\input{GanttChart1a}
\end{ganttchart}
}}\\
\subfloat[]{\label{fig:opt3}\noindent\resizebox{.7\textwidth}{!}{
\begin{ganttchart}[hgrid, inline, bar height = 1, bar top shift = 0, x unit = .13cm, y unit chart= .5cm]{102}
\input{GanttChart2}
\end{ganttchart}
}}\\
\subfloat[]{\label{fig:opt4}\noindent\resizebox{.7\textwidth}{!}{
\begin{ganttchart}[hgrid, inline, bar height = 1, bar top shift = 0, x unit = .13cm, y unit chart= .5cm]{102}
\input{GanttChart2a}
\end{ganttchart}
}}\\
\end{center}
\caption{Gantt charts depicting the optimal solutions to the 3-machine instance with processing times $p = (3,10,10,10,10,50,50,50,50,50)$ (1 small job, 4 middle size jobs and 5 large jobs) and $s=0.7$. The grey blocks indicate the setup times, the numbered blocks are scheduled job parts. Each row of blocks gives the schedule for a machine. } \label{fig:allopt}
\end{figure*}

\begin{figure*}[h!t]
\begin{center}
\noindent\resizebox{.7\textwidth}{!}{
\begin{ganttchart}[hgrid, inline, bar height = 1, bar top shift = 0, x unit = .13cm, y unit chart= .5cm]{102}
\input{GanttChart3}
\end{ganttchart}
}
\end{center}
\caption{Gantt chart depicting the unique optimal solutions to the 3-machine instance with processing times $p = (3,10,10,10,10,50,50,50,50)$ (1 small job, 4 middle size jobs and 4 large jobs) and $s=0.7$. }\label{fig:inst2}
\end{figure*}

We now describe the other three optimal solutions for this instance.
The second optimal schedule, in Figure~\subref*{fig:opt3}, is obtained by scheduling job 1 on machine 1, job 2 split on machine 2 and 3, job 3 on machine 2 (or 3), and jobs 4 and 5 as split jobs on the machines not used by job 3. The remaining jobs are again all split on all three machines. It is easily verified that the objective of this schedule is the same as the objective of the schedule in Figure~\subref*{fig:inst1}: the completion time of job 3 increases by 2, and the completion times of jobs 4 and 5 each decrease by 1, and all other completion times remain the same.
The remaining two optimal schedules, in Figures~\subref*{fig:opt2} and \subref*{fig:opt4} are obtained by switching jobs 3 and 4. We note that this continues to hold if their processing times differ by some small value.

If we slightly change the instance by deleting one of the large jobs, then there is a unique optimal solution, which splits job 3 over machines 1 and 2 and continues with splitting job 4 over machines 1 and 3. Job 5 and the four large jobs are split over all three machines, see Figure~\ref{fig:inst2}.

That such a subtle change causes such a substantial change in the optimal schedule bodes ill for an algorithmic approach like the one in Section~\ref{sec:prop2}. These examples do not rule out that there exists an optimal schedule in which the jobs are started (or finished) in SPT order. A proof of either of these properties has proved elusive, however.

\section{Approximation algorithm}
\label{sec:approx}

We will now show a constant-factor approximation algorithm for our problem, for an arbitrary number of machines.
We remark that we do not know whether this problem is NP-hard, but the examples in the previous section do show that the way a job is scheduled in an optimal schedule may depend on jobs that occur later in the schedule. Our approximation algorithm, on the other hand, is remarkably simple, and only uses a job's processing time and the setup time to determine how to schedule the job.

We schedule the jobs in order of non-decreasing processing time. Let $s>0$ and let $\alpha$ be some constant that will be determined later. Job $j$ will be scheduled such that it completes as early as possible under the restriction that it uses at most $\ell_j := \min\{ \lceil \alpha p_j / s \rceil, m \}$ machines. Thus, the job will be scheduled on the at most $\ell_j$ machines that have minimum load in the schedule so far. It is easy to see that a job is always balanced this way.

\begin{theorem}
The algorithm described above is a $(2 + \alpha)$-approximation algorithm for minimising the total completion time with job splitting and uniform setup times, provided that $\alpha \geq \frac 14 (\sqrt{17}-1)$.
\end{theorem}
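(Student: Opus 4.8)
The plan is to control each job's completion time in the algorithm by the average machine load just before it is scheduled, and then to play the resulting sum off against two lower bounds on $\opt$. Write $A_j$ for the completion time of job $j$ produced by the algorithm and $\ell_j=\min\{\lceil \alpha p_j/s\rceil,m\}$. Since job $j$ is placed balanced on the $\ell_j$ currently least-loaded machines, a direct computation gives $A_j=\bar m_j+s+p_j/\ell_j$, where $\bar m_j$ is the average load of those $\ell_j$ machines (and $A_j$ is only smaller if the balancing forces fewer machines to be used). Because the average of the $\ell_j$ smallest loads is at most the overall average, and the total load placed before job $j$ is $\sum_{i<j}(\ell_i s+p_i)$, I obtain
\[ A_j\ \le\ \frac1m\sum_{i<j}(\ell_i s+p_i)\ +\ s\ +\ \frac{p_j}{\ell_j}. \]
Two consequences of the definition of $\ell_j$ drive the whole argument: $\ell_i s\le \alpha p_i+s$ (so splitting inflates the load seen by later jobs by at most a factor $1+\alpha$), while $p_j/\ell_j\le s/\alpha$ whenever $\ell_j<m$ (so the own-processing overhead of a split job is small).

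Next I would establish two lower bounds. The first is the per-job bound $C_j^{*}\ge s+p_j/m$ (each job needs one setup and its work is spread over at most $m$ machines; by Lemma~\ref{lem:balanced} we may take $\opt$ balanced), which gives $\opt\ge ns+\frac1m\sum_j p_j$. The second is a volume / parallel-SPT bound: whatever order the jobs complete in, the first $t$ of them have total volume at least the sum of their $(s+p)$ values, and all of it is done by their largest completion time on $m$ machines, so that completion time is at least $\tfrac1m$ of this volume; summing over $t$ and using the rearrangement inequality to replace the unknown completion order by SPT yields
\[ \opt\ \ge\ \frac1m\sum_{j}(n-j+1)\,(s+p_j). \]
I expect these bounds to be complementary: the second dominates when jobs are large and heavily split, the first when jobs are small and $m$ is large.

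To combine, I would split the algorithm's cost as $\sum_j A_j\le ns+\big(\mathrm{LOAD}+\sum_j p_j/\ell_j\big)$, where $\mathrm{LOAD}=\frac1m\sum_j\sum_{i<j}(\ell_i s+p_i)$. The own-setup term is immediate: $ns\le\opt$ by the first lower bound. For the load term, using $\ell_i s+p_i\le(1+\alpha)p_i+s$ and comparing coefficient by coefficient against the second lower bound (note $n-i\le n-i+1$) gives the clean estimate $\mathrm{LOAD}\le(1+\alpha)\,\opt$. It therefore remains to absorb the own-processing term $\sum_j p_j/\ell_j$ into the remaining budget, i.e.\ to prove $\mathrm{LOAD}+\sum_j p_j/\ell_j\le(1+\alpha)\,\opt$, which would complete the bound $\sum_j A_j\le(2+\alpha)\,\opt$.

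This last inequality is where the real work lies, and it is the step I expect to be the main obstacle. Bounding its two pieces separately is too lossy: $\mathrm{LOAD}\le(1+\alpha)\opt$ together with $\sum_j p_j/\ell_j\le (ns)/\alpha\le\opt/\alpha$ only yields $2+\alpha+1/\alpha$. The key is that these two costs are anti-correlated across job sizes: a job whose own overhead $p_j/\ell_j$ is near its maximum $s/\alpha$ is split over $\ell_j\approx\alpha p_j/s\ge 2$ machines, hence is large ($p_j\ge s/\alpha$) and forces the second lower bound (and so $\opt$) to be large, whereas when all jobs are small ($\ell_j=1$) the load vanishes for large $m$ and the first lower bound already pays for the overhead. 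I would therefore charge the small ($\ell_j=1$) jobs against the first lower bound and the split ($\ell_j\ge 2$) jobs against the second, where they enter with an $\Omega(p_j)$ coefficient. Carrying out this joint accounting collapses the estimate to a single scalar inequality in $\alpha$; requiring it to hold is exactly $2\alpha^2+\alpha\ge 2$, equivalently $\alpha\ge\tfrac14(\sqrt{17}-1)$, which is the hypothesis of the theorem, and one optimizes by taking $\alpha$ at this threshold.
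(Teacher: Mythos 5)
Your upper bound on the algorithm's cost is exactly the paper's ($\tilde C_j \le \frac1m\sum_{k<j}(\ell_k s+p_k)+s+p_j/\ell_j$, with $\ell_k s\le \alpha p_k+s$ and $p_j/\ell_j\le s/\alpha+p_j/m$), and you have correctly identified that the whole proof lives in the final accounting against lower bounds, and even what scalar inequality ($2\alpha^2+\alpha\ge 2$) must come out. But the two lower bounds you propose are too weak to carry out that accounting, and no charging scheme can rescue them, because any such scheme is dominated by taking $(2+\alpha)\max\{\mathrm{LB}_1,\mathrm{LB}_2\}$. Concretely, take $m=n$ and all $p_j$ slightly above $s/\alpha$, so $\ell_j=2$ for every job. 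Your first bound gives $\mathrm{LB}_1=ns+\frac1m\sum_jp_j\approx ns$; your second gives $\mathrm{LB}_2=\frac{n+1}{2}(s+p)\approx\frac n2(1+\frac1\alpha)s\approx 1.14\,ns$ at $\alpha=\frac14(\sqrt{17}-1)$. So $(2+\alpha)\max\{\mathrm{LB}_1,\mathrm{LB}_2\}\approx 3.17\,ns$, while the quantity you must dominate, $\sum_j\bigl(\frac{j-1}{m}(2s+p)+s+\frac p2\bigr)\approx \frac n2(2+\frac1\alpha)s+n(1+\frac1{2\alpha})s\approx 3.28\,ns$, exceeds it. (The algorithm is in fact fine on this instance; it is your certificates that cannot prove it.) The structural reason is that the setup content of your volume bound, $\frac{n(n+1)}{2m}s$, is diluted by $m$, and your per-job bound contributes only $s$ per job, so together they cannot pay for the $(1+\frac1\alpha)s$ per-job overhead once most of the weight $1+\alpha$ has been spent on the load term.

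The paper closes this with two stronger lower bounds. First, relaxing the problem so that only the \emph{first} job on each machine needs a setup, the optimum of the relaxation is split-everything-in-SPT-order, giving $\opt\ge\sum_j\bigl(s+\frac1m\sum_{k\le j}p_k\bigr)$ --- note the full, undiluted $s$ per job \emph{and} the prefix sums of processing times in a single bound (your $\mathrm{LB}_1$ has only $\frac1m\sum_j p_j$). Second, since at most $m$ jobs can be preceded by a single setup, at most $m$ more by two, and so on, $\opt\ge\sum_j\lceil j/m\rceil s$, and the paper shows $\sum_j\lceil j/m\rceil s\ge\sum_j\frac{j-1}{m}s+\frac12 ns$. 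That $m$-independent surplus of $\frac12 ns$ is precisely what is missing from your bounds, and it is what turns the otherwise unavoidable requirement $1+\alpha\ge 1+\frac1\alpha$ (i.e.\ $\alpha\ge1$, a $3$-approximation) into $\frac32+\alpha\ge1+\frac1\alpha$, which is your $2\alpha^2+\alpha\ge2$. Taking $(1+\alpha)$ times the first bound plus $1$ times the second then dominates $\sum_j\tilde C_j$ term by term, with no case split on $\ell_j=1$ versus $\ell_j\ge2$ needed. If you replace your two lower bounds by these, the rest of your argument goes through essentially as you outlined.
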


\begin{proof}
Let $\sigma$ be the schedule produced by the described algorithm. Note that the total load (processing times plus setup times) of all jobs in $\sigma$ up to, but not including, job $j$ is upper bounded by $L_j = \sum_{k < j} ( p_k + \ell_k s )$, since job $k$ introduced at most $\ell_k$ setups. Therefore, the average load on the $l_j$ least loaded machines is upper bounded by $L_j / m$. Since job $j$ is balanced, we can thus upper bound the completion time $\tilde C_j$ of job $j$ in the schedule by $L_j / m + p_j / \ell_j + s$. Note that this is an upper bound on the completion time of job $j$ when we try to schedule it on at most $\ell_j$ machines.

Noting that
\begin{eqnarray*}
p_j / \ell_j &=& p_j / \min\{ \lceil \alpha p_j / s \rceil, m \} \\
             &\leq &  p_j / \lceil \alpha p_j / s \rceil + p_j / m \\
             &\leq & ( 1/\alpha ) s + p_j / m,
\end{eqnarray*}
and
\[
	\ell_k s = \min\{ \lceil \alpha p_k / s \rceil, m \} s < \alpha p_k + s,
\]
we obtain
\begin{eqnarray*}
	\tilde C_j &\leq& L_j / m + p_j / \ell_j + s \\
	&\leq& \frac{1}{m}\sum_{k < j}  \left( p_k + \ell_k s  \right) + p_j / \ell_j + s \\
	&<& \frac{1}{m}\sum_{k < j}\big( ( 1 + \alpha ) p_k + s \big) + p_j / m + ( 1 + 1 / \alpha ) s \\
	&\leq& \frac{1+\alpha}{m} \sum_{k \leq j} p_k + \left( \frac{j-1}{m} + 1 + \frac{1}{\alpha} \right) s.
\end{eqnarray*}

We can lower bound the sum of completion times in an optimal schedule by $\sum_j (s + \frac 1m \sum_{ k\leq j } p_k)$:
suppose we only needed a setup time for the first job to be processed on a machine, for any machine. Clearly, the optimal sum of completion times for this problem gives a lower bound on the optimum.
Now, the optimal schedule when we only need a setup time for the first job on a machine processes the jobs in SPT order and splits each job over all machines, which gives a sum of completion times of $\sum_j (s+\frac 1m\sum_{k\le j} p_k)$.

Also, in any schedule, at most $m$ jobs are preceded by only one setup, at most another $m$ by two setups, etc., giving a lower bound of $\sum_j \lceil j / m \rceil s$ on the sum of completion times: this is exactly the optimal value when all processing times are $0$. We will show below that $\sum_j \lceil \frac jm \rceil s \ge \sum_j \frac{j-1}m s + \frac 12 ns$.

Hence, by using $1+\alpha$ times the first bound, and $1$ time the second bound, we get
\begin{eqnarray*}
\begin{array}{rll}
\left(2 + \alpha \right) & \sum_j C_j \geq
( 1+\alpha )\sum_j & \Big(s + \frac 1m \sum_{ k\leq j } p_k\Big) \\
 & & + \Big( \sum_j \frac{j-1}m s + \frac 12 ns \Big) \\
= &
\sum_j \left( \frac{ 1+\alpha  }{ m } \sum_{ k\leq j } p_k + \right. & \left.(1+\alpha)s + \frac{j-1}m s + \frac 12 s \right),
\end{array}
\end{eqnarray*}
which is at least as large as $\sum_j \tilde C_j$ provided $\alpha > 0$ and $\frac 32 + \alpha \geq 1 + \frac 1\alpha$, which is equivalent to $\alpha \geq \frac 14 (\sqrt{17}-1)$.

Next we show that $\sum_j \left\lceil \frac jm \right\rceil s \ge \sum_j \frac{j-1}m s + \frac 12 ns$. Let  $j=qm+a$ for some $q\ge 0$ and $a\in\{1,\dots,m\}$. Then
\begin{eqnarray*}
\left\lceil\frac jm \right\rceil-\frac{j-1}{m} &=& (q+1)-(qm+a-1)/m \\
                                               &=& 1-(a-1)/m.
\end{eqnarray*}
Now assume that $n = rm + b$, for some integer $r\ge 0$ and $b\in \{1,\ldots, m\}$. Then
\begin{eqnarray*}
\sum_{j=1}^{n}\left\lceil\frac jm \right\rceil-\sum_{j=1}^{n}\frac{j-1}{m}&=&r\sum_{a=1}^{m}\left(1-\frac{a-1}{m}\right)+\sum_{a=1}^{b}\left(1-\frac{a-1}{m}\right)\\
&=&rm+b-r\sum_{a=1}^{m}\frac{a-1}{m}-\sum_{a=1}^{b}\frac{a-1}{m}\\
&=&n-r(m-1)/2-\frac{1}{2}(b-1)b/m\\
&\ge &n-r(m-1)/2-\frac{1}{2}(b-1)\\
&=&n-(rm+b)/2+r/2+1/2\\
&=&n/2+r/2+1/2\ge n/2.\\
\end{eqnarray*}
Hence multiplying both sides with $s$ yields
\begin{eqnarray*}
\sum_{j=1}^{n}\left\lceil\frac jm \right\rceil s \geq \sum_{j=1}^{n}\frac{j-1}{m}s + \frac 12 ns.
\end{eqnarray*}
\end{proof}

\begin{corollary}
There exists a $2 + \frac 14 (\sqrt 17-1) < 2.781 $-approximation algorithm for minimising total completion time with job splitting and uniform setup times.
\end{corollary}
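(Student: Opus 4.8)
The plan is to invoke the preceding theorem and optimise the free parameter $\alpha$. The theorem guarantees that the algorithm is a $(2+\alpha)$-approximation for every $\alpha$ satisfying $\alpha \geq \frac{1}{4}(\sqrt{17}-1)$. Since the guarantee $2+\alpha$ is strictly increasing in $\alpha$, the best bound obtainable from the theorem is achieved by taking $\alpha$ as small as the feasibility condition permits, namely $\alpha = \frac{1}{4}(\sqrt{17}-1)$.

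First I would fix $\alpha = \frac{1}{4}(\sqrt{17}-1)$, which is exactly the threshold value at which the inequality $\frac{3}{2}+\alpha \geq 1 + \frac{1}{\alpha}$ appearing in the theorem's proof holds with equality; this confirms it is the smallest admissible choice. Substituting this value into the ratio $2+\alpha$ yields an approximation factor of $2 + \frac{1}{4}(\sqrt{17}-1)$, which is precisely the constant named in the corollary. I would note in passing that monotonicity of $2+\alpha$ means nothing is lost by using the extreme feasible value: any larger $\alpha$ only weakens the guarantee, and any smaller $\alpha$ violates the hypothesis of the theorem.

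It remains only to verify the numerical bound. Using $\sqrt{17} < 4.124$, I would estimate $\frac{1}{4}(\sqrt{17}-1) < \frac{1}{4}(3.124) = 0.781$, so that $2 + \frac{1}{4}(\sqrt{17}-1) < 2.781$. No genuine obstacle arises here: the entire content of the corollary is the observation that the feasible region for $\alpha$ is a half-line and the objective $2+\alpha$ is monotone on it, so the optimum sits at the boundary, together with a one-line arithmetic check of the decimal estimate.
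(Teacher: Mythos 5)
Your proposal is correct and matches the paper's (implicit) argument: the corollary is just the preceding theorem instantiated at the smallest admissible value $\alpha = \frac14(\sqrt{17}-1)$, followed by the numerical check $2+\frac14(\sqrt{17}-1) < 2.781$. Your additional verification that this $\alpha$ makes $\frac32+\alpha = 1+\frac1\alpha$ hold with equality is a nice sanity check but not strictly needed.
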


\section{A polynomial-time approximation scheme}\label{sec:PTAS}
We give an approximation scheme which runs in polynomial time if the number of machines is assumed constant. The idea is simple: by splitting a job $j$, at most $p_j$ on its completion time can be saved. It is easy to show that the value of a non-preemptive SPT schedule is no more than $\sum_j p_j$ larger than $\opt$. In particular, if we schedule the first $K=n-m/\epsilon$ jobs by non-preemptive SPT then the extra cost is at most $\sum_{j=1}^K p_j$. But, as we will see, this is only an $\epsilon$-fraction of the total completion time of the last $m/\epsilon$ jobs. These last jobs we schedule optimally given the schedule of the first $K$ jobs.

Now, we define the algorithm and its running time in more detail. Let, as before, $p_1\le \dots \le p_n$. Let $K=n-m/\epsilon$. (If $K\le 0$ then $n\le m/\epsilon$ and the optimal solution can be found in constant time.) Assume that $K$ is integer.
Let $\rho$ be an optimal schedule and let $\rho(K)$ be the schedule $\rho$ restricted to the jobs $1,2,\dots,K$. By Lemma \ref{lem:SPT} we may assume that $\rho(K)$ has no idle time. Let $t_i(\rho)$ be the completion time of machine $i$ in $\rho(K)$. The algorithm makes an approximate guess about the values $t_i(\rho)$. That means, it finds values $t_i$ such that
\begin{equation}\label{eq:t_i}
t_i(\rho)\le t_i\le t_i(\rho)+s+p_K.
\end{equation}
Note that for any $i$, we have $t_i(\rho)\le K(s+p_K)$. Hence, we need to try only $K^m$ guesses for $(t_1,\dots,t_m)$. Assume from now that we guessed $(t_1,\dots,t_m)$ correctly, i.e.,~(\ref{eq:t_i}) is satisfied.

We apply SPT to the jobs $1,2,\dots,K$ such that no machine $i$ is loaded more than $t_i+s+p_K$. This can easily be done as follows: apply list scheduling in SPT order and \emph{close} a machine once its load becomes $t_i$ or more. Let $T_i$ be the completion time of machine $i$ in the resulting schedule. Then $T_i\le t_i+s+p_K\le t_i(\rho)+2(s+p_K)$. Next, we find a near-optimal completion of the schedule by guessing for each job $j>K$ a set $M_j$ and apply linear programming. There are $2^{m(n-K)}$ possibilities for choosing such sets, which is a constant. The linear program works as follows. Note that the LP of Section~\ref{sec:P2} can be extended to do the following. Given a set $M_j$ for each job $j$ and a time $T_i$ for each machine, we can find the optimal schedule among all schedules for which: (i) job parts are in SPT order on each machine, (ii) machine $i$ does not start before $T_i$, (iii) job $j$ can only be scheduled on machines in $M_j$, and (iv) job $j$ has a setup time $s$ for each machine in $M_j$ even when its processing time $x_{ij}$ is zero. Note that it is not clear if the LP gives us the real optimal completion since we have not proved that the SPT properties hold also for optimal schedules if an initial part is fixed, as we do here. However, we can show that the solution given by the LP is close to optimal.

\paragraph{Approximation ratio}
Let $\sigma$ be the final schedule and let $\tilde C_j$ be the completion time of job $j$. Here we use $\opt$ to denote the objective value of optimal schedule $\rho$. For any $h\in\{1,\dots,n\}$ define $\mu_h= \sum_{k=1}^{h}(s+p_k)/m$. Then for any schedule, the $h$-th completion time is at least $\mu_h$.
Hence,
\begin{eqnarray*}\opt &\ge &\sum_{h=1}^n\mu_h \ge\sum_{h=K+1}^n\mu_h \ge  \sum_{h=K+1}^n\mu_K = (m/\epsilon)\mu_K \\ &=& \frac{1}{\epsilon}\sum_{k=1}^{K}(s+p_k).
\end{eqnarray*}
Further on in the proof we will use $C_h$ as the completion time of job $h$ in the optimal schedule. Here we use the notation $C^{(h)}$ for the $h$-th completion time of $\rho$, ($h=1,\dots,n$). Notice, that $C^{(h)}$ is not necessarily equal to $C_h$, the optimal completion time of job $h$ in $\rho$. For $h\le K$ it is easy to see that $\tilde C_h \le C^{(h)}+s+p_h$.
This implies
\begin{eqnarray}\label{eq:2}
\sum_{h=1}^{K} \tilde C_h &\le & \sum_{h=1}^{K}(C^{(h)} +s+p_h) \nonumber \\
                          & = & \sum_{h=1}^{K}C^{(h)}+\sum_{h=1}^{K}(s+p_h)  \\
                          &\le &  \sum_{h=1}^{K}C^{(h)} +\epsilon\opt. \nonumber
\end{eqnarray}
So for the first $K$ jobs we are doing fine. Next we give a bound on the total completion time of the other jobs.

Let $M^*_j$ be the set of machines used by job $j$ in the optimal schedule $\rho$. One of the guesses of the algorithm will be $M_j=M^*_j$ for $j>K$. We show that the corresponding LP-solution gives a near-optimal completion of the schedule.

A feasible LP solution is to take for $x_{ij}$, $j>K$, the values that correspond to $\rho$ and take choose values $C_j^{LP}=C_j+2(s+p_K)$, where we we remind that $C_j$ is the completion time of job $j$ in the optimal schedule $\rho$. The latter is feasible since $T_i\le t_i(\rho)+2(s+p_K)$. Hence, we can bound the total completion times of jobs $K+1,\ldots,n$ by
\begin{eqnarray}\label{eq:3}
 \sum_{h=K+1}^{n} \tilde C_h & \le & \sum_{h=K+1}^{n} C_h^{LP}  \nonumber \\
                             & \le & \sum_{h=K+1}^{n}(C_{h} +2(s+p_K)) \\
                             & =   & \sum_{h=K+1}^{n}C_{h} +2(n-K)(s+p_K). \nonumber
\end{eqnarray}
To bound the second term in the right hand side of (\ref{eq:3}) we derive another bound on $\opt$:
\begin{eqnarray*}
\opt & \ge & \sum_{h=1}^n\mu_h \ge\sum_{h=K+1}^n\mu_h \\
     & = & \sum_{h=K+1}^n \sum_{k=1}^{h}(s+p_k)/m \\
     & \ge & \sum_{h=K+1}^n \sum_{k=K+1}^{h}(s+p_k)/m \\
     & \ge & \sum_{h=K+1}^n (h-K)(s+p_K)/m  \\
     &  >  & \frac{1}{2}(n-K)^2(s+p_K)/m \\
     & = & \frac{1}{2} (n-K)(s+p_K)/\epsilon.
\end{eqnarray*}
Combining this with~(\ref{eq:3}) we get
\[
\sum_{h=K+1}^{n} \tilde C_h\le \sum_{h=K+1}^{n}C_{h} +   4\epsilon\opt \le \sum_{h=K+1}^{n}C^{(h)} +   4\epsilon\opt.
\]
Adding~(\ref{eq:2}) we can bound the total completion time by $(1+5\epsilon)\opt$.

\section{Hardness for weighted completion times}
We prove that introducing weights for the jobs in our problem makes it strongly NP-hard for any number of machines and weakly NP-hard for 2 machines.
\label{sec:NPhard}
\begin{theorem} The problem of minimising total weighted completion time with job splitting and uniform setup times on parallel identical machines ($P|\rm{s,split}|\sum w_jC_j$) is strongly NP-hard.
\end{theorem}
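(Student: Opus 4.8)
The plan is to reduce from a strongly NP-hard number problem, and the natural candidate is $3$-\textsc{Partition}: given $3t$ positive integers $a_1,\dots,a_{3t}$ with $\sum_i a_i = tB$ and each $a_i$ strictly between $B/4$ and $B/2$, decide whether they can be partitioned into $t$ triples each summing to exactly $B$. The reason weighted completion time should be hard while the unweighted version is (plausibly) easy is that weights let us pin down \emph{which} jobs must finish by a given deadline, so we can force a combinatorial packing. I would use a single machine for the core gadget, or a small fixed number of machines, and exploit the fact that with $\sum w_j C_j$ the WSPT intuition competes against the cost of setups: splitting a job saves processing time on its own completion but pays an extra setup $s$ that delays everything after it, and the weights control how that trade-off resolves.

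The construction I would aim for places the $3t$ ``item'' jobs and $t$ ``separator'' jobs so that a schedule of weighted cost below a target threshold $T$ exists if and only if the items group into triples of size $B$. Concretely, I would make the separator jobs heavy enough (large $w_j$) that in any near-optimal schedule they appear at $t$ prescribed positions, carving the timeline into $t$ blocks; the item jobs are light and must be slotted into these blocks. To couple ``the three items in a block sum to exactly $B$'' with the cost, I would choose processing times $p_i = a_i$ (suitably scaled) and set the setup time $s$ and block lengths so that a block is exactly large enough for three items of total length $B$ plus their three setups, with no room to split any item — so each item occupies exactly one machine-slot and splitting only wastes capacity and incurs extra setup cost. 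The quadratic behaviour of $\sum w_j C_j$ in the total load means any deviation from perfectly balanced triples strictly increases the cost above $T$, which is what yields the ``if and only if.''

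The main obstacle I expect is controlling the \emph{splitting} option: unlike a pure sequencing reduction, here the algorithm may split any item across machines, and I must argue that in an optimal (or below-threshold) schedule no item is ever split, so that the schedule really is a clean assignment of whole items to blocks. The $1/4 < a_i/B < 1/2$ bounds of $3$-\textsc{Partition} are precisely what force exactly three items per block and rule out two-or-four packings, and I would lean on Lemmas~\ref{lem:1part}, \ref{lem:SPT}, and \ref{lem:balanced} to normalise the candidate schedule (SPT-ordered, balanced parts, at most one part per machine per job) before the counting argument. For the strong-hardness claim across \emph{any} number of machines I would either run the gadget on a single machine (so extra machines are irrelevant after padding with large dummy jobs that occupy the spare machines at low cost) or replicate the block structure per machine; the delicate part is ensuring that adding machines cannot create a cheaper schedule that sidesteps the packing, which again reduces to showing splitting never helps once the threshold is tight. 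Finally I would verify that all numbers are polynomially bounded so the reduction is genuinely \emph{strong} NP-hardness, and remark that the separate weakly NP-hard result for two machines follows from an analogous reduction from \textsc{Subset Sum}.
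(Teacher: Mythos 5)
There is a genuine gap: the gadget you describe cannot be made to work, and the proposal misses the one idea that actually drives the paper's reduction. First, a single-machine core gadget is impossible: on one machine job splitting is useless (parts of the same job cannot run in parallel there, and each part costs an extra setup), and a uniform setup just adds $s$ to every processing time, so $1|s,split|\sum w_jC_j$ is solved exactly by Smith's rule on $p_j+s$ --- no NP-hard packing can be encoded on a single machine, and padding extra machines with dummies cannot rescue this. Second, the separator-job mechanism has no force in a pure min-sum objective without release dates or deadlines: a heavy job cannot be pinned to a prescribed \emph{late} position (the scheduler would simply pull it forward and strictly lower the cost), and a ``block'' has no hard capacity, so ``no room to split an item'' is not a constraint you can actually impose. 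For strong NP-hardness the number of machines must grow with the instance; the paper takes $m=t$ machines, one per triple, and needs no separator jobs at all --- the machine boundaries are the bins.

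The missing key idea is the weight choice $w_j=p_j+s$. With these weights, the cost of any unsplit schedule telescopes to $\frac12\sum_i l_i^2+\frac12\sum_j(s+p_j)^2$, where $l_i$ is the total load of machine $i$ including setups; the second term is schedule-independent and the first is minimised exactly when the loads are balanced, which for $s$ large forces three jobs per machine, each triple summing to the target. Your appeal to ``the quadratic behaviour of $\sum w_jC_j$ in the total load'' is only valid for this specific weight choice, which you never make: for general weights the objective is not a function of the loads alone, and nothing forces balanced triples. (With this trick the $B/4<a_i<B/2$ bounds are not even needed; the large setup time already forces three jobs per machine.) Finally, splitting is excluded by a global counting argument, not a capacity argument: any split creates at least $3t+1$ setups, each contributing at least $s^2$ to the weighted cost since every weight is at least $s$, giving at least $(6t+1)s^2$ versus $6ts^2+O(ts)$ for the balanced unsplit schedule, so for $s$ suitably (polynomially) large every split schedule is strictly worse. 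Your instincts about keeping the numbers polynomially bounded and deriving the two-machine weak hardness from \textsc{Subset Sum} are right, but as written the construction would not go through.
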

\begin{proof} We reduce from 3-\textsc{Partition}: given $3n$ positive numbers $a_1,\ldots,a_{3n}$ and a number $A$ such that $a_1+\dots+a_{3n}=nA$, does there exists a partition $A_1,\ldots,A_n$ of $\{1,\ldots,3n\}$ such that $|A_i|=3$ and $\sum_{j\in A_i}a_j=A$ for all $i$? Given an instance of 3-\textsc{Partition}, we construct the following instance of our scheduling problem: We have $n$ machines and $3n$ jobs. We set $p_j=a_j$ and $w_j=a_j+s$ for all $j=1,\ldots,3n$, where the setup time $s$ is some large enough number, to be defined later.

The idea behind the reduction is the following: the large setup time will make sure that exactly three jobs are scheduled (unsplit) per machine. The weights are chosen so that a schedule where all machines complete at exactly the same time is optimal, if such a schedule is feasible.

Suppose we schedule the jobs unsplit where $A_i$ is the set of jobs processed on machine $i$. Then, the cost of the schedule is:
\begin{eqnarray*}
\sum_{j=1}^{3n}w_jC_j&=&\sum_{i=1}^{n}\sum_{j\in A_i}w_jC_j\\
                     &=&\sum_{i=1}^{n}\sum_{j\in A_i}(s+a_j)\sum_{k\le j}(s+a_k)\\
										 &=&\sum_{i=1}^{n}\sum_{j\in A_i}\sum_{k\le j}(s+a_j)(s+a_k)\\
										 &=&\frac{1}{2}\sum_{i=1}^{n}\left[\left(\sum_{j\in A_i}(s+a_j)\right)^2+\sum_{j\in A_i}(s+a_j)^2\right]\\
										 &=&\frac{1}{2}\sum_{i=1}^{n}l_i^2+\frac{1}{2}\sum_{j=1}^{3n}(s+a_j)^2,
\end{eqnarray*}										
where $l_i$ is the total load on machine $i$. Note that the second term is independent of the schedule. This cost is minimised when $l_i=l_h$ for all $i,h$ and this can be realised if a perfect 3-partition exists. Let us denote this minimum by $\opt_{\rm{3P}}$.

If no perfect 3-partition exists, then any schedule where no jobs are split has strictly higher cost than $\opt_{\rm{3P}}$. It remains to prove that also any schedule with at least one split job has a strictly higher cost than $\opt_{\rm{3P}}$.

First observe that \begin{eqnarray*}
\opt_{\rm{3P}}&=& \frac{1}{2}\sum_{i=1}^{n}(3s+A)^2+\frac{1}{2}\sum_{j=1}^{3n}(s+a_j)^2 \\
              &=& 6ns^2+O(ns).
\end{eqnarray*}
Now assume that at least one job is split, then there are at least $3n+1$ setup times of $s$ each. Consider the extreme case where all $3n$ values $a_j$ are zero. In this case it is easy to see that the weighted sum of the $3n$ completion times is at least $(6n+1)s^2$. Clearly, this bound holds as well for arbitrary value $a_j$. For large enough $s$ we have $(6n+1)s^2>\opt_{\rm{3P}}$.
\end{proof}

\begin{theorem} The problem $P2|\rm{s,split}|\sum w_jC_j$ is weakly NP-hard.\end{theorem}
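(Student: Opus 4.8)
The plan is to reduce from \textsc{Partition} (more precisely from its equal-cardinality variant), reusing almost verbatim the cost computation of the previous theorem but specialised to $m=2$. Given an instance, I would use two machines, create one job per input number with $p_j$ equal to that number and weight $w_j=s+p_j$, and take the uniform setup $s$ to be a large integer (polynomially bounded in the instance values, after scaling, so the reduction stays pseudo-polynomial). Exactly as before, for any schedule in which no job is split and $A_i$ denotes the set of jobs placed on machine $i$, the weighted completion time equals $\tfrac12(l_1^2+l_2^2)+\tfrac12\sum_j(s+p_j)^2$, where $l_i$ is the total load (processing plus setups) on machine $i$. The second term is a constant and $l_1+l_2$ is fixed, so among unsplit schedules the cost is minimised exactly when the two loads are balanced.

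The core of the argument is to show that the two loads can be balanced iff the instance is a YES-instance, and to turn this into a cost threshold. Since $l_i=|A_i|s+\sum_{j\in A_i}p_j$, balancing the loads forces, for large $s$, first $|A_1|=|A_2|$ (to equalise the dominant $s$-term) and then $\sum_{A_1}p_j=\sum_{A_2}p_j$; this is precisely an equal-cardinality equal-sum partition. I would therefore reduce from the equal-cardinality \textsc{Partition} problem, which is itself weakly NP-hard (it follows from ordinary \textsc{Partition} by padding the instance with as many $0$'s as there are numbers, the zeros serving only to match cardinalities). Because all data are integral, a NO-instance admits no exact balance: the two machine loads must differ by at least $1$, whence $l_1^2+l_2^2\ge\tfrac12(l_1+l_2)^2+\tfrac12$, so the best unsplit schedule for a NO-instance exceeds the balanced value by a fixed positive amount. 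This gives a clean separation around the threshold $\tfrac14(l_1+l_2)^2+\tfrac12\sum_j(s+p_j)^2$.

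The main obstacle, just as in the previous theorem, is to rule out split schedules so that the clean unsplit analysis governs the optimum. Here I would first note that the LP counting argument behind Lemma~\ref{lem:balanced} is insensitive to the objective coefficients, so it also yields an optimal \emph{weighted} schedule in which every job is balanced. I would then run a magnitude argument mirroring the strong-hardness proof: any schedule that splits at least one job uses at least one extra setup, hence has at least $N+1$ setups for $N$ jobs, and already in the degenerate case where all $p_j=0$ the resulting weighted completion time has an $s^2$-coefficient strictly larger than that of the balanced unsplit schedule. Since the processing times contribute only $O(s)$ and $O(1)$ terms, choosing $s$ large enough makes every split schedule cost strictly more than the unsplit optimum $\opt$, so splitting is never beneficial.

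Finally I would assemble the reduction: a YES-instance yields an unsplit balanced schedule meeting the threshold, while for a NO-instance every unsplit schedule exceeds it by the integrality gap and every split schedule exceeds it by the magnitude argument. The delicate points are the simultaneous choice of $s$ --- large enough both to force no-splitting and to make the cardinality term dominate, yet small enough (after scaling) to keep all quantities polynomially bounded, so that the construction is a genuine weak (pseudo-polynomial) reduction --- and verifying that the extra-setup penalty incurred by a split strictly dominates any saving it could produce in the completion times.
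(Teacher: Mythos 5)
Your proposal is correct and follows essentially the same route as the paper: the same instance ($p_j=a_j$, $w_j=s+p_j$, a large uniform setup $s$), the same quadratic identity $\tfrac12(l_1^2+l_2^2)+\tfrac12\sum_j(s+a_j)^2$ for unsplit schedules, and the same setup-counting magnitude argument showing any split schedule costs at least one extra $s^2$ term. The only cosmetic difference is that the paper reduces directly from the equal-cardinality restriction of \textsc{Subset Sum}, whereas you start from \textsc{Partition} and recover the cardinality constraint via padding and the dominance of the $s$-term in the loads.
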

\begin{proof}
We now reduce from a restricted form of the \textsc{Subset Sum} problem: Given $2n$ positive integers $a_1,\dots,a_{2n}$ such that $a_1+\dots+a_{2n}=2A$, is there a set $I\subset \{1,\dots,2n\}$ such that $|I|=n$ and $\sum_{i\in I} a_i=A$?
Given an instance of \textsc{Subset Sum}, we construct the following instance of our scheduling problem. We have $2$ machines and $2n$ jobs. We set $p_j=a_j$ and $w_j=a_j+s$ for  $j=1,\ldots,2n$, where the setup time $s$ is some large enough number, to be defined later. The proof follows the same reasoning as the previous proof: the large setup time will now make sure that exactly $n$ jobs are scheduled (unsplit) per machine, and the weights will make sure that a schedule where the two machines complete at exactly the same time is optimal, if such a schedule is feasible.

Suppose we schedule the jobs unsplit. Then, just as in the proof above for an arbitrary number of machines we have that the cost of the schedule is:
\begin{eqnarray*}
\sum_{j=1}^{2n}w_jC_j=\frac{1}{2}(l_1^2+l_2^2)+\frac{1}{2}\sum_{j=1}^{2n}(s+a_j)^2,
\end{eqnarray*}										
where $l_i$ is the total load on machine $i$. Note that the second term is independent of the schedule. This cost is minimised when $l_1=l_2$ and this can be realised if a perfect subset $I$ exists. Let us denote this minimum by $\opt_{\rm{S}}$.

If no perfect subset exists, any unsplit schedule has strictly higher cost. It remains to prove that also any schedule with at least one split has a strictly higher cost than $\opt_{\rm{S}}$.

First observe that \begin{eqnarray*}
\opt_{\rm{S}} &=& (ns+A)^2+\frac{1}{2}\sum_{j=1}^{2n}(s+a_j)^2 \\
              &=& (n^2+n)s^2+O(ns).
\end{eqnarray*}
Now assume that at least one job is split, then there are at least $2n+1$ setup times of $s$ each. Consider the extreme case where all $2n$ values $a_j$ are zero. In this case it is easy to see that the weighted sum of the $2n$ completion times is at least $(n^2+n+1)s^2$. Clearly, this bound holds as well for arbitrary values $a_j$. For large enough $s$ we have $(n^2+n+1)s^2>\opt_{\rm{S}}$.
\end{proof}

\section{Epilogue}
\label{sec:epilogue}
In the following table we gather the state of the art on scheduling problems with job splitting and uniform setup times. For describing the problems in the first column of the table we use the standard three-field scheduling notation \cite{graham1979}. In the first field, expressing the processor environment, we only consider parallel identical machines, denoted by $P$, possibly with the number of parallel machines mentioned additionally. In the second field, expressing job characteristics, the term $pmtn$ denotes ordinary preemption, $split$ denotes job splitting as we consider in this paper and $s$ denotes the presence of uniform setup times.
Though this paper is mainly concerned with problems with a total completion time objective, indicated by $\sum C_j$ in the third field, expressing the objective, we will also show the state of the art on the total weighted completion time (indicated by $\sum w_j C_j$) and on the makespan (indicated by $C_{max}$).

In the second column, we summarize the complexity status of these problems. A question mark indicates that the complexity of the problem is unknown. In the third column we give the best approximation guarantee known, where a `-' indicates that no algorithm with a performance guarantee is known.
If we consider it relevant, we also present, as a footnote, the knowledge on the comparable version with preemption instead of splitting.

\begin{table*}
\caption{Minimising Total (Weighted) Completion Time and Makespan with Job Splitting}
\label{overview2}
\begin{center}
\begin{tabular}[t]{lll}
Problem & Complexity & Algorithm \\
\hline \hline \\
\multirow{2}{*}{$P\:|\:split\:|\:\sum{C_{j}}$} & \multirow{2}{*}{in P} & divide jobs equally
over the \\
&&machines in SPT order
\\\\
$P2\:|\:s, split\:|\:\sum{C_{j}}$ & in P & algorithm of Section~\ref{sec:polytime}
\\\\
$P\:|\:s, split\:|\:\sum{C_{j}}$ & ? & 2.781-approx. in Section~\ref{sec:approx}
\\
\scriptsize{cf. $P\:|\:s, pmtn\:|\:\sum{C_{j}}$} & \scriptsize{in P} & \scriptsize{SPT}
\\\\
\hline \\
\multirow{2}{*}{$P\:|\:split\:|\:\sum{w_{j}C_{j}}$} & \multirow{2}{*}{in P} &
divide jobs
equally over the \\
&&machines in WSPT order
\\
\scriptsize{cf. $P\:|\:pmtn\:|\:\sum{w_{j}C_{j}}$} & \scriptsize{NP-hard
\cite{brunocoffmansethi1974}} & \scriptsize{PTAS \cite{Afrati}}
\\\\
$P\:|\:s, split\:|\:\sum{w_{j}C_{j}}$ & NP-hard & -
\\
\scriptsize{cf. $P\:|\:s, pmtn\:|\:\sum{w_{j}C_{j}}$} & \scriptsize{NP-hard} & \scriptsize{-}
\\\\
\hline \\
$P\:|\:s, split\:|\:C_{max}$ & NP-hard \cite{ster2010} (cf. \cite{chen2006lot}) & $\frac{5}{3}$-approx. split/assignment \cite{chen2006lot}
\\\\
$P\:|\:s, split\:|\:C_{max}$ & NP-hard & $\frac{3}{2}$-approx. wrap-around
alg. \\
if $p_j \geq s$ $\forall j$ \\
\scriptsize{cf. $P\:|\:s, pmtn\:|\:C_{max}$} & \scriptsize{NP-hard \cite{schuurman1999preemptive}} & \scriptsize{PTAS \cite{schuurman1999preemptive}} \\\\
\hline \hline
\end{tabular}
\end{center}
\end{table*}

\bibliographystyle{plain}
\bibliography{SplitScheduling}

\end{document}